\pgfplotsset{compat=1.15}
\newtheorem{theorem}{Theorem}[section]
\newtheorem{prop}[theorem]{Proposition}
\newtheorem{claim}[theorem]{Claim}
\newtheorem*{remark*}{Remark}
\newtheorem*{notation*}{Notation}
\newtheorem*{observation*}{Observation}
\newtheorem*{theorem*}{Theorem}
\newcommand{\mbs}[1]{\text{\textbf{#1}}}
\newcommand\floor[1]{\lfloor#1\rfloor}
\newcommand\ceil[1]{\lceil#1\rceil}
\newcommand{\twopartdef}[4] {
	\left\{
		\begin{array}{lll}
			#1 & \mbox{if } #2 \\
			#3 & \mbox{if } #4
		\end{array}
	\right.
}
\begin{document}
\title{Risk-Free Bidding in Complement-Free Combinatorial Auctions}

\author{
Vishnu V. Narayan\\
McGill University\\
\texttt{vishnu.narayan@mail.mcgill.ca}
\and
Gautam Rayaprolu\\
McGill University\\
\texttt{gautam.rayaprolu@mail.mcgill.ca}
\and
Adrian Vetta\\
McGill University\\
\texttt{adrian.vetta@mcgill.ca}
}



\date{\today}

\maketitle

\begin{abstract}
We study risk-free bidding strategies in combinatorial auctions with incomplete information. Specifically, what is the maximum profit that a complement-free (subadditive) bidder can guarantee in a multi-item combinatorial auction?
Suppose there are $n$ bidders and $B_i$ is the value that bidder $i$ has for the entire set of items. We study the above problem from the perspective of the first bidder, Bidder~1. In this setting, the worst case profit guarantees arise in a duopsony, that is when $n=2$, so this problem then corresponds to playing an auction against a 
budgeted adversary with budget $B_2$. We present worst-case guarantees for two simple and widely-studied combinatorial auctions; namely, the sequential and simultaneous auctions, for both the first-price and second-price case. In the general case of distinct items, our main results are for the class of {\em fractionally subadditive} (XOS) bidders, where we show that for both first-price and second-price sequential auctions Bidder~$1$ has a strategy that guarantees a profit of at least $(\sqrt{B_1}-\sqrt{B_2})^2$ when $B_2 \leq B_1$, and this bound is tight. More profitable guarantees can be obtained for simultaneous auctions, where in the first-price case, Bidder~$1$ has a strategy that guarantees a profit of at least $\frac{(B_1-B_2)^2}{2B_1}$, and in the second-price case, a bound of $B_1-B_2$ is achievable. We also consider the special case of sequential auctions with identical items, for which we provide tight guarantees for bidders with subadditive valuations.
\end{abstract}

\section{Introduction} \label{s:intro}
What strategy should a bidder use in a combinatorial auction for a collection $I$ of items?
This paper studies this question for sequential and simultaneous auctions.
To motivate this question and to formalize the resultant problem, we begin with sequential auctions, which are perhaps
the simplest and most natural method by which to sell multiple items.
These auctions, where the items are ordered and sold one after another, are commonplace in auction house and online sale
environments. The inherent simplicity of a sequential auction arises because a standard single-item mechanism, 
such as an {\em ascending-price}, {\em first-price}, or {\em second-price auction}, can be used for each item in the collection. 
But there is a catch! Whilst a single-item auction is very well understood from both a theoretical perspective -- see, for example, 
the seminal works of \citet{Vic61} and \citet{Mye81} -- and a practical perspective, the concatenation 
of single-item auctions is not.

From a bidder's viewpoint, sequential auctions are perplexing for a variety of reasons. 
To understand this, observe that a sequential auction can be modelled as an \textit{extensive form game}. In such games the basic
notion of equilibrium is a \textit{subgame perfect equilibrium}. 
Unfortunately, these equilibria are, in general, hard to 
compute; see, for example, \cite{DFP06, PST12b,Ete19}. Intriguing structural properties can be derived for the equilibria of sequential auctions, but the recursive nature
of this structure makes reasoning about equilibria complex. 
For example, in a two-bidder sequential auction each subgame reduces to a standard {\em auction with independent valuations} (see \citet{GS01}), but multi-bidder sequential auctions are more complex. Specifically, \citet{PST12} showed that in a first-price sequential auction, each subgame reduces to an {\em auction with externalities}. \citet{NPV19} provided an analogous result for second-price sequential auctions, and we refer the reader to this text for a detailed exposition of equilibria in multi-bidder sequential auctions. 

Subgame perfect equilibria suffer from an additional drawback in that they are very sensitive to changes in the environment.
In particular, the equilibria in a game can change significantly with small changes to the payoff values. It follows that prescriptions (if such prescriptions can be computed) 
derived from the complete information setting, as studied in the aforementioned papers, are unlikely to extend reliably to more practical settings with incomplete information. Given the overwhelming complexity and informational sensitivity of a subgame perfect equilibrium, the following question immediately arises: {\em As a bidder in a sequential auction, how can you (approximately) play your equilibrium strategy?} However, we would argue that this is the wrong question to ask. 
Given that subgame perfect equilibria are very complex and informationally sensitive, it is extremely unlikely that the other bidders will be able to play their 
equilibrium strategies. Then, as a participant in a sequential auction, {\em what bidding strategy should you use instead?} Similar computational and informational motivations also arise for the case of simultaneous auctions. In this paper, we consider the above question for both types of auctions.

Evidently, the answer to this question depends upon the objective of the bidders, their computational resources, and the informational structure inherent
in the auction. We study this problem from the perspective of Bidder~$1$ in the following very general incomplete information setting. What is the maximum {\em risk-free profit} that Bidder~$1$ can guarantee for herself in a combinatorial auction? 
Here, Bidder~$1$ knows her own entire valuation function but does not know the valuation function of the other agents. Clearly, if the other agents' bids are unrestricted then no guarantee is possible. Consequently, we impose a mild assumption on the other agents: Bidder~$i$ can spend at most some fixed budget $B_i$ over the course of the multi-item auction. We will see that the critical case to 
analyze is when there are just two bidders (Bidder ~$1$ and Bidder~$2$). We also assume that the only information Bidder~$1$ has on the other bidder is an estimate that his value for 
the entire collection of items is at most $B_2$; beyond this trivial upper bound, she has no specific information on the values the other bidder has for any subset of 
the items. In the worst case, to maximize her guaranteed profit, we can model this problem as Bidder~$1$ competing 
in the auction against a 
single \textit{adversary} who is incentivized to keep Bidder~$1$'s utility low, and is willing to spend at most his budget $B_2$. This type of approach is analogous to that of a {\em safety strategy} in bimatrix games.
This worst-case setting then corresponds to a special case of an auction with externalities, where Bidder~2 has no value for the items themselves, but is willing to bid on the items simply to prevent Bidder~1 from acquiring them.
In this paper, we will then quantify the maximum risk-free profitability when the valuation function of Bidder~$1$ belongs to the class of subadditive (complement-free)
functions and its subclasses. Interestingly, given the valuation class, tight bounds can be obtained that depend only on $B_1$ (the value Bidder~1 has for the entire set of items) and $B_2$. For example, the risk-free profitability of the class of fractionally subadditive~(XOS) valuation functions is $(\sqrt{B_1}-\sqrt{B_2})^2$, for $B_2\le B_1$,
and this bound is tight. Similarly, we present tight (but more complex) bounds for the class of subadditive valuation functions when the items are identical. We also analyze simultaneous auctions, for which we show that the risk-free profitability of the XOS class is at least
$\frac{(B_1-B_2)^2}{2B_1}$ and $(B_1-B_2)$ for first-price and second-price auctions, respectively.

\subsection{Related Literature}
There is an extensive literature on sequential auctions. The study of incomplete information games was initiated by
Milgrom and Weber \cite{MW82, Web83}. Theoretical studies on equilibria in complete information games include \cite{GS01,PST12, NPV19}.
Given the abundance of sequential auctions in practice, there is also a very large empirical literature covering an assortment of applications ranging from 
antiques~\cite{GO07} to wine~\cite{Ash89} and from fish~\cite{GGK11} to jewellery~\cite{CGV96}.

Recently there has been a strong focus in the computer science community on the design of simple mechanisms.
For combinatorial auctions, {\em simultaneous auctions} are a notable example. These auctions are simple in that,
as with a sequential auction, a standard single-item auction mechanism is used to sell each item.
But in contrast, as the nomenclature suggests, these auctions are now held simultaneously rather than sequentially.
Two important streams of research in this area concern the price of anarchy in simultaneous auctions (see, for example,
\cite{CKS16, BR11, HKM11, FFG13})
and the hardness of computing an equilibrium (see \cite{CP14}).

There has also been a range of papers examining the welfare of equilibria in sequential auctions. 
\citet{BBB09} consider the case of identical items and show that equilibria provide
a factor $1-\frac{1}{e}$ approximation guarantee if there are two bidders with non-decreasing marginal valuations. 
\citet{PST12} study the case of multi-bidder auctions. For sequential first-price auctions, they prove a 
factor $2$ approximation guarantee for unit-demand bidders. In contrast, they show that equilibria can have arbitrarily 
poor welfare guarantees for bidders with submodular valuations. \citet{FLS13} extend this result to the case where each bidder has either a unit-demand or additive valuation function. 

Partly because of these negative results, a common assumption is that sequential auctions may not be a good 
mechanism by which to sell a collection of items. However, there are reasons to believe that, in practice, sequential auctions 
have the potential to proffer high welfare. For example, consider the influential paper of 
\citet{LLN06}. There, they present a simple greedy allocation mechanism with a factor $2$ welfare guarantee for allocating items 
to agents with submodular valuation functions. One interesting implication of this result is that if the items are sold via a second-price 
sequential auction {\em and} every agent (assuming submodular valuations) truthfully bids their marginal value in each round then the outcome will have at least half 
the optimal social welfare.

The study of auctions with \textit{externalities} was initiated by \citet{Fun96} and by \citet{JM96}. In an auction with externalities, each agent $i$'s value for an allocation of the items does not depend only on the items received by agent $i$, but on the identities of the agents that receive the other items. Examples of such settings include firms competing in a market that may seek to acquire items that are valuable to a competitor, or sports teams competing to sign a star athlete to prevent a rival team from signing them instead. In this work, the worst-case setting that we consider corresponds to a special case of an auction with externalities -- one where Bidder~2 has no value for the items themselves but is willing to bid on them in order to prevent Bidder~1 from acquiring them. Interestingly, one of the first papers to study auctions with externalities considers a similar special case. Specifically, \citet{JMS96} consider the sale of a single good to a set of agents with externalities, and construct a revenue-maximizing auction for the seller that extracts a surplus from the agents who do not obtain the auctioned item. The motivation for their work was the situation that occurred after the breakup of the Soviet Union, when Ukraine inherited a large nuclear arsenal. Despite having no direct interest in acquiring the weapons themselves, both the Unites States and Russia agreed to pay over US\$ 1 billion to Ukraine in exchange for the dismantling of the weapons in order to avoid the danger of proliferation.

\subsection{Overview and Results} \label{ss:results}
In Section~\ref{s:model} we explain the sequential auction model and give necessary definitions. We present our measure, the risk-free profitability
of a bidder in incomplete information multi-bidder auctions, and explain how to quantify it via a two-bidder adversarial sequential auction.
In Section~\ref{s:example} we present a simple sequential auction example ({\em uniform additive auctions}) to motivate the problem
and to illustrate the difficulties that arise in designing risk-free bidding strategies, even in very small sequential auctions with at most three items. 

Section~\ref{s:xos-bounds} and Section~\ref{s:subadd-bounds} contain our main results.
In Section~\ref{s:xos-bounds} we begin by presenting tight upper and lower bounds on the risk-free profitability of a fractionally subadditive (XOS) bidder.
For the lower bound, in Section~\ref{ss:xos-lb} we exhibit a bidding strategy 
that guarantees Bidder~$1$ a profit of at least $(\sqrt{B_1}-\sqrt{B_2})^2$.

In Section~\ref{ss:xos-ub} we describe a sequence of sequential auctions that provide an upper bound that is asymptotically equal to the aforementioned 
lower bound as the number of items increases. We prove these bounds for {\em first-price} sequential auctions, but nearly identical proofs show the bounds also apply
for {\em second-price} sequential auctions.
Next we prove that the risk-free profitability of an XOS bidder is lower in sequential auctions than in
simultaneous auctions. Equivalently, a 
budgeted adversary is stronger in a sequential auction than in the corresponding simultaneous auction. 
Specifically, in Section~\ref{ss:simultaneous}, we prove that an XOS bidder has a risk-free profitability of 
at least $\frac{(B_1-B_2)^2}{2B_1}$ in a first-price simultaneous auction and of at least $B_1-B_2$ in a second-price simultaneous auction.
Several other interesting observations arise from these results. First, unlike for sequential auctions, the power of the adversary differs in a simultaneous 
auction depending on whether a first-price or second-price mechanism is used: the adversary is stronger in a first-price auction.
Second, the risk-free strategies we present for simultaneous auctions require {\em no} information about the adversary at all. The performance of the strategy (its risk-free profitability) 
is a function of $B_2$, but the strategy itself does not require that Bidder~$1$ have knowledge of $B_2$ (nor an estimate of it).
Third, for the case of first-price simultaneous auctions, it is necessary that Bidder~$1$ use randomization in her risk-free strategy.

Finally, in Section~\ref{s:subadd-bounds} we study the risk-free profitability of a bidder with a subadditive valuation function. We give a 
possible explanation for why simple strategies fail to perform well in the general case. We then examine the special case where the items 
are identical. We derive tight lower and upper bounds for this setting.

\section{The Model} \label{s:model}

In this section we present our model. In particular, Sections~\ref{ss:seqmodel} and \ref{ss:simulmodel} describe the properties of sequential and simultaneous auction games (and the associated valuation functions of the agents) as used in practice and described in the related literature. Section~\ref{ss:mainmodel} describes the model of the zero-sum game that we will analyze to obtain our results.

\subsection{Sequential Auctions and Valuation Functions} \label{ss:seqmodel}
There are $n$ bidders and a collection $I = \{a_1,\ldots,a_m\}$ of $m$ items to be sold using a sequential auction.
In the $\ell$th round of the auction item $a_{\ell}$ is sold via a first-price (or second-price) auction.
We view the auction from the perspective of Bidder~$1$ who has a publicly-known valuation function $v_1 : 2^I \rightarrow \mathbb{R}_{\geq0}$ 
that assigns a non-negative value 
to every subset of items. We denote $v_1$ by $v$ where no confusion arises. This valuation function is assumed to 
satisfy $v(\emptyset) = 0$ and to be {\em monotone}, that is, $v(S) \leq v(T)$, 
for all $S \subseteq T$. When all the items 
have been auctioned, the \textit{utility} or \textit{profit} $\pi_1$ of 
Bidder~1 is her value for the set of items she was allocated minus the sum of prices of these items.

The sequential auction setting is captured by extensive form games. A {\em strategy} for player $i$ is a function that 
assigns a bid $b_i^t$ for the item $a_t$, depending on the previous bids $\{b_i^\tau\}_{i,\tau<t}$ of all players (and the allocation of the 
first $t-1$ items). The utility (profit) of a strategy profile $\mbs{b}$ for Bidder~1 is the profit Bidder~1 obtains when all bidders bid according to $\mbs{b}$.

The question we then study is how much profit Bidder~1 can guarantee herself. 
We examine the case where $v$ is in the class of {\em subadditive} or {\em complement-free} valuation functions.
Belonging to this class, of particular interest in this paper are {\em additive} functions,
{\em submodular} functions, and {\em fractionally subadditive} or {\em XOS} functions.
These functions are defined as follows.
\begin{itemize}
\item {\tt Subadditive (Complement-Free).} A function $v$ is subadditive if $v(S\cup T)\le  v(S) + v(T)$ for all $S,T \subseteq I$.
\item {\tt Additive (Linear).} A function $v$ is additive if $v(S) = \sum_{a\in S} v(a)$ for each $S \subseteq I$.
\item {\tt Submodular (Decreasing Marginal Valuations).} A function $v$ is submodular if $v(S\cup T)+ v(S\cap T)\le  v(S) + v(T)$ for all $S,T \subseteq I$.
\item {\tt Fractionally Subadditive (XOS).} A function $v$ is fractionally subadditive if there exists a 
nonempty collection of additive functions $\{\gamma_1,\ldots,\gamma_\ell\}$ on $I$ such that for 
every $S \subseteq I$, $v(S) = \max_{j\in[\ell]} \gamma_j(S)$.\footnote{This is the standard definition of XOS functions. Fractionally subadditive functions 
are defined in terms of fractional set covers; the equivalence between fractionally subadditive and XOS functions was shown by \citet{Fei09}.}
\end{itemize}

\noindent\citet{LLN06} showed that these valuation classes form the following hierarchy:

{\small 
{\sc Additive} $\ \subseteq\ $ {\sc Submodular} $\ \subseteq\ $ {\sc Fractionally Subadditive} $\ \subseteq\ $ {\sc Subadditive}}

Other important classes in this hierarchy include unit-demand and gross substitutes valuation functions, but they will not be
needed here.

\subsection{Simultaneous Auctions} \label{ss:simulmodel}

The simultaneous auction setting is similar to the sequential auction setting in that there are $n$ bidders and a collection $I = \{a_1,\ldots,a_m\}$ of $m$ items to be sold. Each bidder makes a bid on each item. Unlike the sequential case, there is now a single time period. Thus each bidder makes a vector of $m$ bids -- one for each item -- simultaneously.

\subsection{Bidding against an Adversary} \label{ss:mainmodel}
To quantify the maximum profit that Bidder~1 can obtain, without loss of generality, 
we may \textit{normalize} the valuation function (and corresponding auction) by scaling the values so that $v(I)=v_1(I) = 1$. Now the maximum guaranteed profit will depend on the strength of the other bidders. We quantify this by a parameter $B$: in the setting where each player $j \geq 2$ has valuation function $v_j$, $B$ is the sum of the total values of the other bidders, i.e., $B = \sum_{j=2}^n v_j(I)$.
This corresponds to an incomplete information auction where the only common knowledge are upper bounds on the value each agent has for the entire set of items. From the perspective of Bidder~$1$, it is apparent that the worst case arises when 
$n=2$, and so $B=B_2=v_2(I)$. Thus we may assume that $n=2$, and we can view Bidder~$1$ as playing against an {\em adversary} with a budget $B$. To see this, observe that for a fixed $B = \sum_{j=2}^n v_j(I)$ if there are $n>=3$ bidders then the worst case for Bidder~1 arises when the other bidders coordinate to act as a single adversary: however, when the budget is split between two or more other bidders then their ability to buy a single item of high value decreases.

Thus, Bidder~$1$ seeks a strategy that works well against a rational bidder who, by monotonicity, has a value at most $B$ for
any subset of the items. We model the game that we will analyze as a zero-sum game with the following properties. Bidder~$1$'s payoff in this game is simply her profit $\pi_1$ from the auctions, that is, her value for the set that she is allocated minus the sum of the prices of the items. Since Bidder~$2$ is viewed as an adversary, and the game is zero-sum, his payoff in this game, $\pi_2$, is equal to $-\pi_1$. Bidder~$2$ can evaluate his payoff during the game, i.e., Bidder~$1$'s valuation function is common knowledge. Additionally, here the adversary's budget constraint is tight: for example, in the sequential case, in time step $t$, if Bidder~$2$ paid $p_2^{t-1}$ for the items that have already sold, then his next bid $b_2^t$ is at 
most $B - p_2^{t-1}$. We call this the {\em risk-free sequential auction game} $\mathcal{R}(v,B)$. The \textit{guaranteed profit} for Bidder~1 is the minimum profit obtainable by playing a safety strategy in this game (i.e. the {\em value} of this game). For any normalized valuation $v$, we denote this profit by $\pi_1^*(v,B)$, or simply $\pi_1^*$ where there is no ambiguity. For any class of set functions $\mathcal{C}$ and any budget $B\in(0,1)$, we want to find the maximum profit Bidder~1 can guarantee in {\em all} $m$-item instances $\mathcal{R}(v,B)$ where $v\in\mathcal{C}$, which is precisely $\min_{v\in\mathcal{C}}\pi_1^*(v,B)$. We call this the \textit{risk-free profitability} $\mathcal{P}(\mathcal{C},B)$ of the class $\mathcal{C}$. For a 
budgeted adversary in a simultaneous auction,
the analogue of this budget-constrained bidding is that the {\em sum} of the adversary's bids on the items is at most the budget $B$. We define risk-free profitability analogously for simultaneous auctions. The focus of this paper is to quantify the risk-free profitability of the aforementioned classes of valuation functions for both sequential and simultaneous auction mechanisms. We note that for the sequential setting, in the course of our proofs we will show implicitly that the aforementioned zero sum game has a \textit{subgame perfect equilibrium} in pure strategies. Consequently this is the solution concept that we will analyze for our main result.

Table~\ref{tab:results} summarizes our obtained bounds by auction type, where the valuation functions of the agents are normalized as described previously.

\begin{table}[ht]
    \centering
    \begin{tabular}{ |c|c|c|  }
        \hline
         Valuation Class & Lower Bound & Upper Bound \\
        \hline
        \multicolumn{3}{|c|}{Sequential Auctions (First- and Second-Price)} \\
        \hline
        Additive & $(1-\sqrt{B})^2$ & $(1-\sqrt{B})^2$ \\
        Submodular & $(1-\sqrt{B})^2$ & $(1-\sqrt{B})^2$ \\
        XOS & $(1-\sqrt{B})^2$ & $(1-\sqrt{B})^2$ \\
        Subadditive (Identical) & $t^{*}(B)$ & \begin{tabular}{c}
            $t^{*}(B)$, $B\in(0,\frac{1}{4})$ \\ $(1-\sqrt{B})^2$, $B\in[\frac{1}{4},1)$
        \end{tabular} \\
        \hline
        \multicolumn{3}{|c|}{Simultaneous Auctions (First-Price)} \\
        \hline
        XOS & \begin{tabular}{c}
            $1-B^2$, $B\in(0,3-2\sqrt{2})$ \\ $\frac{(1-B)^2}{2}$, $B\in[3-2\sqrt{2},1)$
        \end{tabular} & \begin{tabular}{c}
            $1-2B$, $B\in(0,\frac{1}{4})$ \\ $\frac{2}{3}(1-B)$, $B\in[\frac{1}{4},1)$
        \end{tabular} \\
        \hline
        \multicolumn{3}{|c|}{Simultaneous Auctions (Second-Price)} \\
        \hline
        XOS & $(1-B)$ & $(1-B)$ \\
        \hline
        \end{tabular}
    \caption{Valuation Classes and their Risk-Free Profitability}
    \label{tab:results}
\end{table}

\section{Example: Uniform Additive Auctions} \label{s:example}
We now present a simple example of a sequential auction with an agent (Bidder~1) that strategizes against an adversary (Bidder~2), which will be helpful for two reasons. 
First, it illustrates some of the strategic issues facing the agent and, implicitly, the adversary in a sequential auction.
Second, these examples form base cases in our proof in Section~\ref{ss:xos-ub}.

The auction is defined as follows. Bidder~1 has an additive valuation function where each item has exactly the
same value. That is, for an auction with $m$ items, we have that $v(a_t)=\frac{1}{m}$. The adversary Bidder~2 has a budget $B$.
We call this the {\em uniform additive auction} on $m$ items and denote it by $\mathcal{A}_m$.
For our example, we are interested in uniform additive auctions where $m\le 3$. We denote by $b_i^j$ Bidder~$i$'s bid on item $j$.\medskip

\noindent{\tt One Item.}
First, consider the case $\mathcal{A}_1$. We have a single item $a_1$ with $v(\{a_1\}) = 1$ for Bidder~1. Let 
$b_1$ and $b_2$ be the bids placed on the item by Bidders 1 and 2, respectively. 
Clearly if $b_1 < B$ then the adversary's best response is to bid $b_2=b_1^+$ and win the item.
Then $\pi_1 = \pi_2 = 0$. On the other hand, if $b_1 \geq B$, then the adversary is constrained by his budget 
and cannot beat Bidder~1. Thus Bidder~1 wins and obtains a profit of $\pi_1=1-b_1$.
It follows that Bidder~1's risk-free strategy is to bid $B$ and win the item at price $B$ for a
guaranteed profit of $\pi_1^* = 1-B$. Clearly, if $B \geq 1$ then $\pi_1^* = 0$ since the adversary can 
prevent Bidder~1 from winning the item. Specifically, we have shown
    \begin{alignat}{3} \label{ex:oneitem}
    \pi_1^* &= \twopartdef{1-B}{0 \leq B < 1}{0}{1 \leq B}.&
\end{alignat}
\noindent{\tt Two Items.}
Now consider the case $\mathcal{A}_2$. So there are two items $a_1$ and $a_2$ 
and Bidder~1 has an additive valuation function with $v(\{a_1\}) = v(\{a_2\}) = \frac{1}{2}$ and $v(\{a_1,a_2\}) = 1$. 
We divide our analysis into three cases.
\begin{itemize}
    \item $B < \frac{1}{4}$: If $B < \frac{1}{4}$, then 
    Bidder~1 can bid $B$ on each item and win both items at price $B$ each, so her guaranteed profit is 
    at least $1-2B>\frac{1}{2}$. If Bidder~1 bids less than $B$ on 
    either item, then Bidder~2 can win that item, ensuring that Bidder~1's profit is less than her value of the other item, that is $\frac{1}{2}$. 
    Bidder~1's risk-free strategy is thus to bid $B$ on both items for a profit $\pi_1^* = 1-2B$.
    
    \item $\frac{1}{4} \leq B < \frac{1}{2}$: If Bidder~1 
    bids $b_1^1 = d$ on $a_1$, with $0 \leq d \leq \frac{1}{2}$, then Bidder~2 can either win by 
    bidding $b_2^1>d$ or lose by bidding $b_2^1<d$ (for now, we assume $d < B$). In the former 
    case, the adversary's budget in the second auction is $B-b_2^1$, and there is only one item 
    remaining. Then, reasoning as we did for the one item setting, it is easy to see that Bidder~1's 
    profit from the second item is $\pi_1 = \frac{1}{2} - (B-b_2^1) = \frac{1}{2}-B+b_2^1$.
    This is minimized (with value $\frac{1}{2}-B+d$) when Bidder~2 bids an amount negligibly 
    larger than $d$. In the latter case, the adversary loses the first item, so he has budget $B$ in the 
    second auction. Bidder~1's combined profit (on both items) is then $\pi_1 = (\frac{1}{2}-d) + (\frac{1}{2}-B) = 1-B-d$. 
    For $d=0$ we have $\frac{1}{2}-B+d < 1-B-d$ and for $d=B$ we have $\frac{1}{2}-B+d \ge 1-B-d$, since $B\ge \frac14$. 
    But $\frac{1}{2}-B+d$ is 
    increasing in $d$ and $1-B-d$ is decreasing in $d$. Therefore, assuming Bidder~2 plays a best response, 
    we see that $\pi_1$ is maximized 
    when the minimum of these values is maximized.
    That is $\pi_1^* = \max_{0\le d<B}\, \min \left[\frac12-B+d, 1-B-d\right]$. The optimal choice is $d = \frac{1}{4}$ giving $\pi_1^* = \frac{3}{4}-B$. Note that our assumption that $d<B$ is validated: 
    if Bidder~1 bids an amount $d$ that is greater than or equal to $B$ 
    on the first item the she will win both items for a total profit  $(\frac{1}{2}-d)+(\frac{1}{2}-B)= 1-d-B \leq 1-2B \leq \frac{3}{4}-B$.
    
    \item $\frac{1}{2} \leq B < 1$: Suppose Bidder~1 bids $b_1^1 = d$ on $a_1$, with $0 \leq d \leq \frac{1}{2}$, then Bidder~2 can either win by 
    bidding $b_2^1>d$ or lose by bidding $b_2^1<d$. In the former
    case, the adversary's budget in the second auction is $B-b_2^1$, so Bidder~1's profit from the 
    second item is $\pi_1 = \frac{1}{2} - (B-b_2^1) = \frac{1}{2}-B+b_2^1$. This is 
    minimized (with value $\frac{1}{2}-B+d$) when Bidder~2 bids an amount negligibly 
    larger than $d$. In the latter case, the adversary loses the first item, so he still has budget $B$ for the 
    second auction. Thus Bidder~1 loses the second item and makes no profit on it. Bidder~1's total profit is 
    then $\frac{1}{2}-d$. Since $\frac{1}{2}-B+d$ is increasing in $d$ and $\frac{1}{2}-d$ is decreasing in $d$, her profit $\pi_1$ is 
    maximized at $d = \frac{B}{2}$. This gives a maximum guaranteed profit of $\pi_1^* = \frac{1}{2}-\frac{B}{2}$.
\end{itemize}
Putting this all together we have that
\begin{equation} \label{ex:twoitem}
    \centering
    {
    \renewcommand{\arraystretch}{1.2}
    \begin{tabular}{|c||c|c|c|c|}
        \hline
        Budget& $0 \leq B < \frac{1}{4}$ & $\frac{1}{4} \leq B < \frac{1}{2}$ & $\frac{1}{2} \leq B < 1$ & $ 1 \leq B$ \\
        \hline
        Profit $\pi_1^*$ & $1-2B$ & $\frac{3}{4}-B$ & $\frac{1}{2}-\frac{B}{2}$ & 0 \\
        \hline
    \end{tabular}
    }
\end{equation}

Before proceeding to the case of the uniform additive auction with three items, we emphasize that even the very simple case $\mathcal{A}_2$ 
illustrates many of the strategic considerations that arise in more complex sequential auctions.
To wit, in the first time period Bidder~1 faces the standard conundrum that bidding high increases her chances of winning but at 
the expense of receiving a smaller profit if she does win. More interestingly, in this adversarial setting, Bidder~1 has an additional
incentive for bidding high. If she bids high {\em and} loses then she faces a weaker adversary in the subsequent time period.
That is, by winning the first item at a high price the adversary's budget is significantly reduced in the auction for the second item. 
Counterintuitively, therefore, in adversarial sequential actions, Bidder~1 has an incentive to lose some of the items!

Interestingly, the adversary has perhaps even stronger incentives to lose than Bidder~1.
Whilst winning the first item does hurt Bidder~1, by reducing his budget, this also reduces the strength of the
adversary in the subsequent round.
Thus, the optimal outcome for adversary is that he lose the first item at a high price; this keeps the profit
of Bidder~ 1 low and increases the relative strength of the adversary in the second auction. This is in stark contrast with the 
simultaneous case, where both Bidder~1 and the adversary have an incentive to win every item.

We remark that these basic motivations and incentives play a fundamental role in sequential auctions with many items. 
That said, let's now proceed to the case $\mathcal{A}_3$ of three items.\medskip

\noindent{\tt Three Items.} Now there are three items $a_1, a_2$ and $a_3$ 
and Bidder~1 has an additive valuation function with $v(\{a_1\}) = v(\{a_2\})  = v(\{a_3\}) =\frac{1}{3}$. 
Applying a similar case analysis, her maximum guaranteed profits are then:
{\small
\begin{equation} \label{ex:threeitem}
    \centering
    {
    \renewcommand{\arraystretch}{1.5}
    \begin{tabular}{|c|c|}
        \hline
        Budget & Profit $\pi_1^*$ \\
        \hline \hline
        $0 \leq B < \frac{1}{9}$ & $1-3B$ \\
        \hline
        $\frac{1}{9} \leq B < \frac{1}{6}$ & $\frac{8}{9}-2B$ \\
        \hline
        $\frac{1}{6} \leq B < \frac{1}{3}$ & $\frac{7}{9}-\frac{4B}{3}$ \\
        \hline
        $\frac{1}{3} \leq B < \frac{5}{9}$ & $\frac{7}{12}-\frac{3B}{4}$ \\
        \hline
        $\frac{5}{9} \leq B < \frac{2}{3}$ & $\frac{4}{9}-\frac{B}{2}$ \\
        \hline
        $\frac{2}{3} \leq B < 1$ & $\frac{1}{3}-\frac{B}{3}$ \\
        \hline
        $1 \leq B$ & 0 \\
        \hline
    \end{tabular}
    }
\end{equation}}

We remark that this profit function is still piecewise linear and is so for $\mathcal{A}_m$ in general. However the complexity of the profit function grows
rapidly as the number of items increases.

\section{Tight Bounds for XOS Valuation Functions} \label{s:xos-bounds}
In this section we prove tight bounds on the risk-free profitability of Bidder~1 with a fractionally subadditive (XOS)
valuation function. In Sections~\ref{ss:xos-lb} and~\ref{ss:xos-ub} we study the sequential auction setting, and in Section~\ref{ss:simultaneous} we consider the simultaneous case. Specifically, in Section~\ref{ss:xos-lb}, we show that the agent has a strategy in the normalized auction that gives a guaranteed profit of 
$(1-\sqrt{B})^2$ when the adversary has a budget of $B$. This is equivalent to a profit of $(\sqrt{B_1}-\sqrt{B_2})^2$ in the 
original (unnormalized) auction. Then, in Section~\ref{ss:xos-ub}, we prove that 
no strategy can guarantee a profit that is greater than this by an (asymptotically zero) additive quantity.

\subsection{The XOS Lower Bound} \label{ss:xos-lb}
It is quite straightforward to obtain a lower bound on the profitability of Bidder~1 when she has an XOS valuation function. She simply chooses the additive function that maximizes her valuation under the assumption that she wins every item. Then, for each item, she bids a fixed fraction of the value of this item under this additive function. For any XOS valuation, this strategy guarantees a profit of at least $(1-\sqrt{B})^2$ against any strategy utilized 
by an adversary with a budget of $B\in (0,1)$. Consequently, we have the following.

\begin{theorem}\label{thm:xos-lb}
$\mathcal{P}(XOS,B) \geq (1-\sqrt{B})^2$.
\end{theorem}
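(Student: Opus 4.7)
The plan is to exhibit an explicit non-adaptive bidding strategy for Bidder~1 and analyze its guaranteed profit. Since $v$ is XOS, we can write $v(S)=\max_{j\in[\ell]} \gamma_j(S)$ for some collection of additive functions $\gamma_1,\dots,\gamma_\ell$. Fix $\gamma$ to be any additive function in this collection that attains $\gamma(I)=v(I)=1$; this is the ``witness'' of Bidder~1's value under the assumption that she wins every item. Bidder~1's proposed strategy is then the oblivious one: on each item $a_t$ she bids $b_1^t = \alpha\,\gamma(a_t)$ for a scaling parameter $\alpha \in (0,1)$ to be optimized at the end.

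Let $W$ and $L$ denote the sets of items that Bidder~1 wins and loses against an arbitrary adversary strategy. In the first-price case, whenever the adversary wins an item $a_t \in L$ he must bid, and therefore pay, strictly more than $\alpha\,\gamma(a_t)$. Summing over $L$ and invoking the budget constraint yields $\alpha\,\gamma(L) < B$, hence $\gamma(L) < B/\alpha$ and $\gamma(W) > 1 - B/\alpha$. On the items in $W$, Bidder~1's total payment is exactly $\alpha\,\gamma(W)$, while her value is $v(W) \ge \gamma(W)$ by the XOS inequality. Combining, her profit is at least $(1-\alpha)\,\gamma(W) \ge (1-\alpha)(1 - B/\alpha)$. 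The analogous bound holds for second-price sequential auctions, since when the adversary wins $a_t\in L$ his second-price payment equals Bidder~1's bid $\alpha\,\gamma(a_t)$, leaving the budget inequality $\alpha\,\gamma(L)\le B$ in place.

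The remaining step is to optimize the guarantee $f(\alpha):=(1-\alpha)(1-B/\alpha) = 1 - \alpha - B/\alpha + B$ over $\alpha\in(0,1)$. Differentiating gives $f'(\alpha) = -1 + B/\alpha^{2}$, so the optimum is $\alpha^* = \sqrt{B}$, which lies in $(0,1)$ since $B\in(0,1)$. Substituting back yields $f(\sqrt{B}) = 1 - 2\sqrt{B} + B = (1-\sqrt{B})^2$, matching the claimed bound.

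There is no serious technical obstacle here; the substance of the argument is identifying the correct ``witness'' additive function $\gamma$ and realizing that a uniform $\sqrt{B}$-fraction of $\gamma$ is the scaling that balances two competing effects: the multiplicative loss $(1-\alpha)$ suffered on items won, and the constraint $\gamma(L)\le B/\alpha$ that the adversary's budget imposes on items lost. Two features worth emphasizing are that the strategy is non-adaptive (it ignores the auction order and the adversary's past bids), and that it requires only the value of $B$ and the witness $\gamma$, no further information about the adversary.
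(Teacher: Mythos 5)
Your proposal is correct and follows essentially the same route as the paper: both select the additive witness $\gamma^*$ attaining $v(I)$, bid a fixed multiple of $\gamma^*(a_t)$ on each item, bound $\gamma^*(L)$ via the adversary's budget, and lower-bound the profit on the won items using the XOS inequality $v(W)\ge\gamma^*(W)$. The only cosmetic difference is that you carry a free parameter $\alpha$ and optimize it to $\sqrt{B}$ at the end, whereas the paper fixes the bid fraction at $\sqrt{B}$ from the outset.
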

\begin{proof}
Let $I = \{a_1,\ldots,a_m\}$ be the set of auctioned items, and $v$ be Bidder~1's XOS valuation function on $I$. 
Since $v$ is XOS, there is a set $\{\gamma_1,\gamma_2, \ldots, \gamma_\ell\}$ of (normalized) additive set functions on $I$ such that
for any $S \subseteq I$ we have $v(S) = \max_{i\in[\ell]} \gamma_i(S)$. 
Let $\gamma^*= \text{arg}\hspace{-.05cm}\max_{i\in[\ell]} \gamma_i(I)$ be an additive function that 
induces the value of $v$ on the entire set of items $I$. Thus $v(I) = \gamma^*(I)$. Moreover, by definition of $v$, we 
have that
\begin{equation}\label{eq: gamma-star}
 v(S) \geq \gamma^*(S) \hspace{1.5cm} \forall S \subseteq I.
\end{equation}
Now consider the following 
strategy for Bidder~1. In time period $t$ let Bidder~1 place a bid
of $b_1^t = \sqrt{B}\cdot \gamma^*(a_t)$ on item $a_t\in I$, for all $t\in [m]$. Let $I_1 \subseteq I$ 
be the set of items won by Bidder~1 at the end of the auction (that is, by the end of time period $m$). Similarly
let $I_2\subseteq I$ be the set of items won by Bidder~2 at the end of the auction.
Therefore $I_1\cup I_2 =I$ and, by the additivity of $\gamma^*$, we have 
\begin{equation}\label{eq: sum=1}
\gamma^*(I_1) + \gamma^*(I_2)  = 1.  
\end{equation}
It follows that during the sequential auction the adversary spent
\begin{equation}\label{eq:spending}
\sum_{t\in I_2} b_2^t 
\ \ge\  \sum_{t\in I_2} b_1^t  
\ =\  \sum_{t\in I_2} \sqrt{B}\cdot \gamma^*(a_t)  
 \ =\  \sqrt{B}\cdot \sum_{t\in I_2} \gamma^*(a_t) 
\  = \ \sqrt{B}\cdot  \gamma^*(I_2).
\end{equation}
Here the inequality arises because Bidder~2 won the items in $I_2$. The first equality follows by the definition of Bidder~1's safety strategy
and the third equality follows by the additivity of $\gamma^*$. But the adversary has a total budget of $B$.
Therefore, together this budget constraint on Bidder~2 and Inequality (\ref{eq:spending}) imply that 
$\sqrt{B}\cdot \gamma^*(I_2) \leq B$. Hence, $\gamma^*(I_2) \leq \sqrt{B}$.
From Equation (\ref{eq: sum=1}) we then derive that
\begin{equation}\label{eq: lower-I_1}
\gamma^*(I_1) \geq (1-\sqrt{B}).
\end{equation}
Now define $\pi_1$ to be the total profit obtained by Bidder~1 using this safety strategy. Then 
\begin{align}\label{eq:pi}
    \pi_1 
    &= v(I_1) - \sum_{t\in I_1} b_1^t  
         =  v(I_1) -\sum_{t\in I_1} \sqrt{B}\cdot \gamma^*(a_t) \nonumber \\ 
    &=  v(I_1) - \sqrt{B}\cdot \gamma^*(I_1) 
    \geq (1-\sqrt{B})\cdot \gamma^*(I_1).
\end{align}
Here the inequality follows from the property (\ref{eq: gamma-star}) applied to the subset $I_1$.
Finally, combining (\ref{eq: lower-I_1}) and (\ref{eq:pi}) gives
$\pi_1 \ge (1-\sqrt{B})^2$, 
as required.
\end{proof}

Next we will show this bound is tight by providing a construction where the adversary has a strategy limiting the
profitability of Bidder~$1$ to this quantity. This is surprising because the bidding strategy described above is {\em non-adaptive} -- it does not adapt to 
the history of the auction. Given the extra flexibility afforded by adaptive strategies, one would expect a priori the optimal 
risk-free strategy to be adaptive. In fact, the result that follows from this construction is doubly surprising, as it holds even if the adversary commits to his deterministic strategy in advance and Bidder~1 is allowed to randomize her strategy. Unlike the simultaneous case (which we discuss later), in this sequential setting the simple bidding strategy presented above is optimal for Bidder~$1$, and she can obtain no improvement with an adaptive or randomized strategy.

\subsection{The XOS Upper Bound} \label{ss:xos-ub}
In this section we present a matching upper bound, showing that the highest guaranteed profit of a risk-free strategy in a 
normalized two-player sequential auction with an XOS valuation is within a $\frac{1}{\sqrt{m}}$-additive factor of our lower bound. 
To do this, we present a sequential auction with an XOS valuation function 
where the game value is at most $(1-\sqrt{B})^2+ \frac{1}{\sqrt{m}}$. 
In fact, rather surprisingly, this upper bound applies even for additive valuation functions. 
Specifically, we prove that for the uniform additive auction $\mathcal{A}_m$ 
the adversary has a strategy that ensures that the profit of Bidder~1 is at most $(1-\sqrt{B})^2+ \frac{1}{\sqrt{m}}$. Consequently, 
the upper bound applies to every class of valuation functions that contains the additive functions! Together with the lower bound, 
this resolves the profitability of several well-studied classes, including the additive, submodular and gross substitutes valuation classes. 
We will see later on, in Section \ref{s:subadd-bounds}, that the situation is not as simple for subadditive valuations (that are not contained in XOS). 
We denote by $XOS_m$ the class of $XOS$ functions on $m$ items. The following 
theorem, together with the lower bound, gives our main result: $\mathcal{P}(XOS,B)$ is asymptotically equal to $(1-\sqrt{B})^2$ when $B \in (0,1)$.

\begin{theorem} \label{thm:xos-ub}
$\mathcal{P}(XOS_m,B) \leq (1-\sqrt{B})^2 + \frac{1}{\sqrt{m}}$.
\end{theorem}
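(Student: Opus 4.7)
The plan is to pass to the uniform additive auction $\mathcal{A}_m$ and exhibit an adaptive adversary strategy there. Since additive valuations lie in $XOS_m$, any bound proved for $\mathcal{A}_m$ bounds the risk-free profitability of $XOS_m$. In $\mathcal{A}_m$ every item is worth $v := 1/m$, so one only needs to exhibit a single adversarial strategy and show that no Bidder~1 strategy attains profit above $(1-\sqrt{B})^2+1/\sqrt{m}$.

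Let $V(r,b)$ denote the value of the subgame with $r$ items remaining and adversary budget $b$. Observing that within a round the adversary's two useful actions are to concede (bid strictly below $b_1$) or to win by matching $b_1+\epsilon$, backward induction yields the Bellman recurrence
\[
V(r,b) \;=\; \max_{b_1\in[0,b]} \min\bigl(\,v - b_1 + V(r-1,b),\ V(r-1,b-b_1)\bigr),
\]
with $V(0,b)=0$. The adversary's strategy is precisely the one determined by this recurrence: in state $(r,b)$, after Bidder~1 places $b_1$, the adversary selects whichever of the two options (bid $b_1^+$ versus bid $0$) gives the smaller continuation value for Bidder~1. I would then define the target envelope $F(r,b):=(\sqrt{rv}-\sqrt{b})_+^2$, noting $F(m,B)=(1-\sqrt{B})^2$, and prove by induction on $r$ that $V(r,b)\le F(r,b)+e(r)$ for an error sequence with $e(m)\le 1/\sqrt{m}$.

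The base case $r=1$ is a direct calculation: $V(1,b)=(v-b)_+$, so $V(1,b)-F(1,b)=2\sqrt{b}(\sqrt{v}-\sqrt{b})_+$, maximized at $b=v/4$ with value $v/2$; thus $e(1)=v/2=1/(2m)$. For the inductive step, substituting $V(r-1,\cdot)\le F(r-1,\cdot)+e(r-1)$ into the recurrence yields $V(r,b)\le e(r-1)+G(r,b)$, where $G$ is the deterministic max-min. Equating the two arguments of the inner min gives the interior optimizer $b_1^{\star}=2\delta\sqrt{b}-\delta^2$ with $\delta:=\sqrt{v}/(2\sqrt{r-1})$; substituting back and simplifying the square-root expressions produces the identity
\[
G(r,b)-F(r,b) \;=\; \frac{v}{4(r-1)} \;-\; \frac{\sqrt{vb}}{\sqrt{r-1}\,(\sqrt{r}+\sqrt{r-1})^2} \;\le\; \frac{v}{4(r-1)}.
\]
Telescoping gives $e(m)\le v/2+(v/4)H_{m-1}=O(\log m/m)$, which is bounded by $1/\sqrt{m}$ for every $m\ge 1$ (since $\log m\le 4\sqrt m$). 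Setting $r=m$ and $b=B$ then yields $V(m,B)\le F(m,B)+1/\sqrt{m}=(1-\sqrt{B})^2+1/\sqrt{m}$.

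The hard part is the algebraic verification of the per-step inequality $G(r,b)-F(r,b)\le v/(4(r-1))$, together with handling the two boundary regimes in which the interior optimizer is infeasible: the small-budget case $b<v/(4(r-1))$, where $b_1^{\star}$ would be negative and the optimum lies at $b_1=0$, and the overshoot case $b>rv$, where $F(r,b)=0$. In both regimes the claim can be verified directly: when $b$ is tiny the adversary lacks the resources to reduce $V(r,b)$ noticeably below $rv$, and the gap $V-F$ is controlled by $2\sqrt{rvb}$, which is itself small; when $b\ge rv$ one has $V(r,b)=0=F(r,b)$.
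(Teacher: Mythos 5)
Your proposal follows the same essential route as the paper: reduce to the uniform additive auction $\mathcal{A}_m$ (legitimate, since additive $\subseteq$ XOS), and run a backward induction in which the round-$r$ value is bounded by the envelope $(\sqrt{rv}-\sqrt{b})_+^2$ via an equalizing bid that balances the ``win'' and ``lose'' branches. The difference is bookkeeping: the paper renormalizes each subgame to total value $1$ and carries a fixed per-level error $\frac{1}{\sqrt{k}}$, which survives the induction only because the $\frac{m-1}{m}$ scaling factor absorbs it ($\frac{m-1}{m}\cdot\frac{1}{\sqrt{m-1}}\le\frac{1}{\sqrt{m}}$), whereas you keep values unnormalized and telescope a per-step increment of $\frac{v}{4(r-1)}$. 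Your interior computation is correct -- the identity $G(r,b)-F(r,b)=\frac{v}{4(r-1)}-\frac{\sqrt{vb}}{\sqrt{r-1}(\sqrt{r}+\sqrt{r-1})^2}$ checks out -- and it buys a genuinely sharper error of $O(\log m/m)$ in place of $\frac{1}{\sqrt{m}}$, with a shorter argument (the paper's three-page case analysis with an explicit quadratic root $\tilde{\alpha}$ collapses to one algebraic identity).

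There is, however, a regime your two named boundary cases do not cover, and it is exactly the regime the paper must treat separately as its ``high budget'' case: $(r-1)v<b\le rv$. Here $F(r-1,b)=0$ while $F(r,b)>0$, so the closed form you used for $F(r-1,\cdot)$ in deriving $b_1^{\star}=2\delta\sqrt{b}-\delta^2$ is invalid (it would return a negative value), and the crossing point must be recomputed with the truncated envelope. A direct check shows the per-step bound still holds there -- the crossing value is roughly $\frac{\epsilon^2}{4(r-1)v}$ versus $F(r,b)\approx\frac{\epsilon^2}{4rv}$ for $b=rv-\epsilon$, a gap of at most $\frac{v}{4r(r-1)}$ -- so the approach survives, but this needs to be written out; it is not subsumed by ``$b\ge rv$ gives $V=F=0$.'' Two smaller points: the threshold for your small-budget regime is $b<\frac{v}{16(r-1)}$ (where $b_1^{\star}<0$), not $\frac{v}{4(r-1)}$; and you should note that $b_1^{\star}\le b$ always holds since $b-2\delta\sqrt{b}+\delta^2=(\sqrt{b}-\delta)^2\ge 0$, so no upper feasibility case is needed.
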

\begin{proof}
We prove this result by induction on $m$. To do this, we start with a simple observation: after the first item has been sold in the 
uniform additive auction $\mathcal{A}_m$ then the sequential auction on items $\{a_2,\dots, a_m\}$ is simply
the auction $\mathcal{A}_{m-1}$ but with the additive values scaled by a multiplicative factor $\frac{m-1}{m}$; that is,
the agent now has a value $\frac{1}{m}$ for each item rather than $\frac{1}{m-1}$ as in the unscaled $\mathcal{A}_{m-1}$.
Consequently, by appropriately scaling the values {\em and} the budget of the adversary we will be able to 
analyze the auction $\mathcal{A}_m$ by studying the first round of that auction and then applying induction on the remaining
rounds.

Formally, for any positive integer $m$ let $f_m:\mathbb{R}_{\geq0}\rightarrow[0,1]$ be a function giving the highest 
guaranteed profit $f_m(x)$ of a risk-free strategy given that the adversary has a budget $B=x$.
 Clearly, for all $m$, we have that $f_m(0) = 1$ and that $f_m(x) = 0$ for any $x\ge1$. Set $f(x)= (1-\sqrt{x})^2$.
Then we want to prove by induction that
\begin{equation}\label{eq:f-bound}
f_m(x) \le f(x)+\frac{1}{\sqrt{m}} \hspace{2cm}\forall m\ge 1, \forall x\in (0,1).
\end{equation}

\noindent{\tt Base Cases:}
For the base cases, consider $m\in\{1,2,3\}$. Note that we have already studied the auctions $\mathcal{A}_1, \mathcal{A}_2$ 
and $\mathcal{A}_3$ in Section \ref{s:example}. Specifically, we found that$f_1(x) = (1-x)$, and that $f_2(x)$ is given 
by (\ref{ex:twoitem}) and $f_3(x)$ is given by (\ref{ex:threeitem}). It can be easily verified (see Figure \ref{fig:fT}) 
that each of the above functions $f_m(x)$, $m\in\{1,2,3\}$, is at 
most $f(x) + \frac{1}{\sqrt{m}}$, for any $x \in [0,1]$. Consequently, the base cases hold. \medskip

\begin{figure}[ht]
    \centering
    \begin{tikzpicture}
        \begin{axis}
        [
            width=8cm,
            height=6cm,
            title={Budget},
            title style=
            {
                at={(current axis.south)},
                anchor=north,
                outer sep=.9cm,
            },
            xmin=0,
            xmax=1.2,
            ymin=0,
            ymax=1.8,
            domain=0:1,
            samples=50,
        ]
        \addplot[thick]
            {1+x-2*sqrt(x)};
        
        \addplot[thick,dashed]
            {1+x-2*sqrt(x) + 0.707107};
        \addplot[thick,dashed,mark=*]
            coordinates {
                (0,1)
                (0.25,0.5)
                (0.5,0.25)
                (1,0)
            };
            
        \addplot[thick,dotted]
            {1+x-2*sqrt(x) + 0.577350};
        \addplot[thick,dotted,mark=*]
            coordinates {
                (0,1)
                (0.111111,0.666667)
                (0.166667,0.555556)
                (0.333333,0.333333)
                (0.555556,0.166667)
                (0.666667,0.111111)
                (1,0)
            };
        
        \legend{$f(x)$,$f(x)+\frac{1}{\sqrt{2}}$,$f_2(x)$,$f(x)+\frac{1}{\sqrt{3}}$,$f_3(x)$}
        \end{axis}
    \end{tikzpicture}
    \caption{Plot of the functions $f(x)$, $f(x)+\frac{1}{\sqrt{2}}$, $f_2(x)$, $f(x)+\frac{1}{\sqrt{3}}$ and $f_3(x)$.}
    \label{fig:fT}
\end{figure}
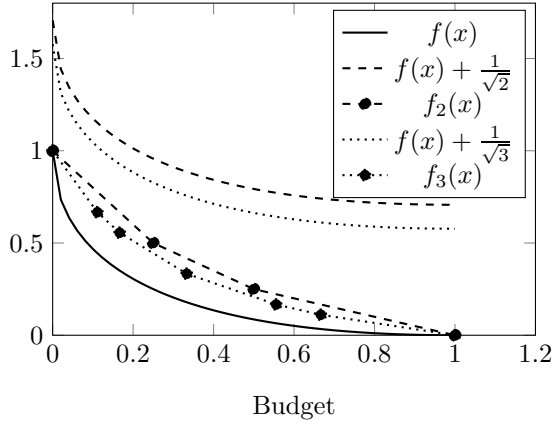

\noindent{\tt Induction Hypothesis:}
Assume that $f_k(x) \leq f(x) + \frac{1}{\sqrt{k}}$ for all $k < m$. \medskip

\noindent{\tt Induction Step:}
Given the induction hypothesis we will now prove that $f_m(x) \leq f(x) + \frac{1}{\sqrt{m}}$.
We will present a strategy for the adversary and prove that this strategy guarantees that Bidder~1 cannot make 
a profit greater than $f(x) + \frac{1}{\sqrt{m}}$ in the uniform additive auction $\mathcal{A}_m$. Specifically, we 
consider the auction for the first item $a_1$ in $\mathcal{A}_m$, and we let $b^1_2= \alpha\cdot \frac{1}{m}$ 
be the adversary's bid on this item. Since Bidder~1 has an additive value $\frac{1}{m}$ for this item, the adversary 
will never make a bid $b^1_2> \frac{1}{m}$. Thus we may assume that the adversary makes a 
bid $b^1_2= \alpha\cdot \frac{1}{m}$ for some $0\le \alpha\le 1$. We then show that for some 
particular choice of $\alpha$, even with an optimal response Bidder~1 does not make a profit 
greater than $f(x) + \frac{1}{\sqrt{m}}$. In determining her optimal response, Bidder~1 faces the 
dilemma of whether or not to outbid the adversary.
Thus we have two possibilities: \medskip

\noindent$\bullet$ {\em Bidder~1 wins item $a_1$.}\\
In this case it is easy to see that Bidder~1 will bid $b_1^1=b_2^{1+}$ (which is $b_2^1 + \epsilon$ for any negligibly 
small $\epsilon$) as any higher bid will lead to a strictly smaller profit
as this is a first-price auction.
Thus, Bidder~1  makes an immediate  profit of $\frac{1}{m} -\alpha\cdot \frac{1}{m}=\frac{1-\alpha}{m}$ on the first item. 
The rest of the sequential auction is a scaled version of $\mathcal{A}_{m-1}$.
As discussed, the additive valuations of Bidder~1 are scaled by a multiplicative factor of $\frac{m-1}{m}$.
Moreover, the budget of the adversary is also scaled. As the adversary lost the first item his budget remains $x$,
which corresponds to a budget of $B=\frac{m}{m-1}\cdot x$ in the scaled auction $\mathcal{A}_{m-1}$.
Therefore, given that the bidders play optimal strategies in the remaining rounds, the maximum profit Bidder~1 can
make is:
\begin{equation}\label{eq-g}
    g_m(x,\alpha) = \frac{1-\alpha}{m} + \frac{m-1}{m}\cdot f_{m-1} \left(\frac{mx}{m-1}\right).
\end{equation}

\noindent$\bullet$ {\em Bidder~1 loses item $a_1$.}\\
If Bidder~1 loses the first item, then Bidder~1 makes no profit on $a_1$. 
Any bid $b_1^1 < b_2^1$ will lose the item.
Since this is a first-price auction the adversary will pay $b_2^1$ if he wins regardless of the bid of Bidder~1.
Thus Bidder~1 is indifferent between any bids less than $b_2^1$. Then, after the first round we 
again have a scaled version of $\mathcal{A}_{m-1}$ where the valuations of Bidder~1 are scaled by a factor 
of $\frac{m-1}{m}$. As the adversary won the first item for a price $b_2^1=\alpha\cdot \frac{1}{m}$ his budget is now 
$x-\alpha\cdot \frac{1}{m} $,
which corresponds to a budget of $B=\frac{m}{m-1}\cdot \left(x-\frac{\alpha}{m}\right)= \frac{mx-\alpha}{m-1}$ in the scaled 
auction $\mathcal{A}_{m-1}$.
Therefore, given that the bidders play optimal strategies in the remaining rounds, the maximum profit Bidder~1 can
make is:
\begin{equation}\label{eq-h}
    h_m(x,\alpha) = \frac{m-1}{m}\cdot f_{m-1}\left(\frac{mx-\alpha}{m-1}\right).
\end{equation}

Evidently, the best response of Bidder~1 to a bid $b_2^1=\alpha\cdot\frac{1}{m}$ is given by 
the maximum of $g_m(x,\alpha)$ and $h_m(x,\alpha)$. Thus, the adversary should select $\alpha$
to minimize this maximum. Specifically,
\begin{align*}
&f_m(x) 
= \min_{0\le \alpha\le 1}\, \max \Big( g_m(x,\alpha)\, , \, h_m(x,\alpha) \Big)\\
&=\min_{0\le \alpha\le 1}\, \max \left( \frac{1-\alpha}{m} + \frac{m-1}{m} f_{m-1} \left(\frac{mx}{m-1}\right)\, , \,
\frac{m-1}{m} f_{m-1}\left(\frac{mx-\alpha}{m-1}\right)\right).
\end{align*}

Thus, our goal is to prove that there exists a bid $b_2^1=\tilde{\alpha}\cdot \frac{1}{m}$ by the adversary 
such that both $g_m(x,\tilde{\alpha})$ and $h_m(x,\tilde{\alpha})$ are at most $f(x) + \frac{1}{\sqrt{m}}$. 
This will ensure that the maximum guaranteed profit of Bidder~1 is $f_m(x)\le f(x) + \frac{1}{\sqrt{m}}$ as required.

Our proof of this fact requires examination of three cases depending upon the magnitude of the budget of
the adversary. 
In the first two cases, where the adversary has either a very low budget or a very high budget we can compute
exactly the bids that Bidder~1 will make in her unique risk-free strategy. These two cases 
do not require the induction hypothesis (nor consideration of the functions $g_m(x,\tilde{\alpha})$ and $h_m(x,\tilde{\alpha})$) but 
constitute a part of our inductive step. The third case, where the 
adversary has an intermediate budget,
is more difficult and represents one of the main technical contributions of this paper. \medskip

\noindent{\tt Low Budget Case:} $0 \leq x < \frac{1}{m^2}$.\\
When the adversary's budget $x$ is less than $\frac{1}{m^2}$ then a risk-free 
strategy for Bidder~1 is to bid $b_1^t = x$ on item $a_t$, for every $t\in[m]$ (that is, Bidder~1 bids the entire budget of the adversary on each item). 
Bidder~1 will then win all of the items for a guaranteed profit of $1-m\cdot x$. On the other hand if Bidder~1 bids an 
amount smaller than $x$ on some item, then the adversary can win this item. Even if Bidder~1 wins all the remaining items
her profit cannot exceed the total additive value of these remaining items which is $(m-1)\cdot \frac{1}{m}=1-\frac{1}{m}$. 
Because $x < \frac{1}{m^2}$, we have that $1-\frac{1}{m} < 1-m\cdot x$. As this is a first-price auction, 
Bidder~1 can never benefit by bidding strictly more than $x$ on any item.
If follows that the maximum profit the Bidder~1 can obtain is $f_m(x)=1-m\cdot x$.

It remains to show that $1-mx \leq (1-\sqrt{x})^2+\frac{1}{\sqrt{m}}$ in this low budget case where $0 \leq x < \frac{1}{m^2}$. 
We prove this statement by partitioning the interval $[0,\frac{1}{m^2})$ at the two points $\frac{1}{1.4m^2}$ and $\frac{1}{1.1m^2}$ into a 
collection of three sub-intervals $\mathcal{I} = \{[0,\frac{1}{1.4m^2}),[\frac{1}{1.4m^2},\frac{1}{1.1m^2}),[\frac{1}{1.1m^2},\frac{1}{m^2})\}$. 
We can then verify separately in each sub-interval that when $x$ falls inside this interval, we have $1-mx \leq (1-\sqrt{x})^2+\frac{1}{\sqrt{m}}$.

For any sub-interval in this collection, let $c(m)$ and $d(m)$ be the endpoints of the sub-interval. When $c(m) \leq x < d(m)$, we have 
\begin{align*}
    (1-\sqrt{x})^2+\frac{1}{\sqrt{m}} &= 1+x-2\sqrt{x}+\frac{1}{\sqrt{m}} &\\
    &\geq 1+c(m)-2\sqrt{d(m)}+\frac{1}{\sqrt{m}} &\\
    &\geq 1 \geq 1-mx.
\end{align*}
Here the first inequality arises because $x \geq c(m)$ and $x < d(m)$; the second inequality applies when $m=3$ and $[c(m),d(m)) \in \mathcal{I}$.
Now it is easy to verify that since $c(m)-2\sqrt{d(m)}+\frac{1}{\sqrt{m}} = 0$ has no real roots greater than $3$ for 
every $[c(m),d(m)) \in \mathcal{I}$, we have $1+c(m)-2\sqrt{d(m)}+\frac{1}{\sqrt{m}} > 1$ for all $m > 3$. Thus we 
have $1-mx \leq (1-\sqrt{x})^2+\frac{1}{\sqrt{m}}$ for all values of $x$ in the interval $[0,\frac{1}{m^2})$.\medskip
 
\noindent{\tt High Budget Case:} $\frac{m-1}{m} < x \leq 1$.\\ 
Suppose that the budget $x$ of the adversary is between $\frac{m-1}{m}$ and $1$. Then a risk-free strategy for 
Bidder~1 is to bid $b_1^t = \frac{x}{m}$ on item $a_t$ for every $t\in[m]$.  The guaranteed profit of this strategy is $\frac{1-x}{m}$
because Bidder~1 will win exactly one item using this strategy. 
To see this, observe that if the adversary wins the first item at the price $\frac{x}{m}$ then his scaled budget in the 
subsequent subgame is exactly $\frac{m}{m-1}\cdot (x-\frac{x}{m})=x$. 
If the adversary loses the first item then his scaled budget in the subsequent subgame is 
$\frac{m}{m-1}\cdot x> \frac{m}{m-1}\cdot \frac{m-1}{m}=1$; so the adversary will win all the remaining items
and hence Bidder~1 cannot win more than one item. But iterating this argument we see that if the adversary 
wins all of the first $m-1$ items then his (unscaled) budget for the final item is just $\frac{x}{m}$ and so Bidder~1 will win the final item
for a profit of $\frac{1-x}{m}$ as required. It is easy to see that no bidding strategy for Bidder~1 guarantees a higher profit: with 
lower bids, Bidder~2 wins each item and ends up with a higher budget in the subsequent subgames; with higher bids, Bidder~1 wins a single item for smaller profit.
    
It remains to show that $f_m(x)=\frac{1-x}{m} \leq (1-\sqrt{x})^2+\frac{1}{\sqrt{m}}$ when $\frac{m-1}{m} < x \leq 1$. We have
     \begin{align*}
        (1-\sqrt{x})^2+\frac{1}{\sqrt{m}} &= 1+x-2\sqrt{x}+\frac{1}{\sqrt{m}} \\
        &\geq 1+\frac{m-1}{m}-2+\frac{1}{\sqrt{m}} \\
        &= \frac{1}{\sqrt{m}} - \frac{1}{m} \\
        &\geq \frac{1}{m^2} \geq \frac{1-x}{m}.
    \end{align*}
Above, the first inequality holds because $\frac{m-1}{m} < x \leq 1$; the second inequality holds when $m\ge 3$; the third inequality holds 
since $x > \frac{m-1}{m}$.
Thus we have $\frac{1-x}{m} \leq (1-\sqrt{x})^2+\frac{1}{\sqrt{m}}$ when $\frac{m-1}{m} < x \leq 1$.\medskip

\noindent{\tt Intermediate Budget Case:} $\frac{1}{m^2} \leq x \leq \frac{m-1}{m}$.\\ 
Recall that by the induction hypothesis $f_{m-1}(x) \leq f(x) + \frac{1}{\sqrt{m-1}}$. Our goal now is to prove 
that $f_{m}(x) =  \min_{0\le \alpha\le 1}\, \max \big( g_m(x,\alpha)\, , \, h_m(x,\alpha) \big) \leq f(x) + \frac{1}{\sqrt{m}}$ 
when $\frac{1}{m^2} \leq x \leq \frac{m-1}{m}$.
Rather than calculate $f_{m}(x)$ exactly, our approach is to find a feasible choice $\tilde{\alpha}$ for the adversary that
ensures that both $g_m(x,\tilde{\alpha})$ and $h_m(x,\tilde{\alpha})$ are at most $f(x) + \frac{1}{\sqrt{m}}$. 
To do this, we begin by investigating the properties of the functions $g_m(x,\alpha)$ and $h_m(x,\alpha)$.
Using these properties, we find a candidate choice $\tilde{\alpha}$ which we first prove is feasible
and second prove gives the desired upper bound.

Let's start by showing that $g_m(x,\alpha)$ and $h_m(x,\alpha)$ are both monotonic functions.
To see this, observe that, for any fixed $m$, since the valuation function is additive and the 
space of available strategies for the adversary is constrained only by his budget, any strategy that is available to 
adversary with budget $\bar x < x$ is also available when his budget is $x$. Hence, the function $f_m$ is 
non-increasing in $x$. Therefore $g_m(x,\alpha)$ is non-increasing in $\alpha$ and $h_m(x,\alpha)$ is 
non-decreasing in $\alpha$, for any fixed $x$.

Now the minimum choice the adversary can make for $\alpha$ is zero.
So suppose the adversary bids $b_2^1 = \alpha\frac{1}{m}$ for the item $a_1$ with $\alpha=0$.
Then clearly $g_m(x,0) = \frac{1}{m} + \frac{m-1}{m}f_{m-1}\left(\frac{mx}{m-1}\right)$ 
and $h_m(x,0) = \frac{m-1}{m}f_{m-1}\left(\frac{mx}{m-1}\right)$.
Consequently, $g_m(x,0) \geq h_m(x,0)$.

On the other hand, consider the maximum choice the adversary can make for $\alpha$. We denote this value by $\alpha_{max}$.
We have two cases. 
\begin{itemize}
\item $x \geq \frac{1}{m}$\\
Then the adversary may set $\alpha=1$ and bid $\frac{1}{m}$ on the first item. In this case, both $g_m(x,1)$ 
and $h_m(x,1)$ are well defined, and we have $g_m(x,1) = \frac{m-1}{m}f_{m-1}\left(\frac{mx}{m-1}\right)$ 
and $h_m(x,1) = \frac{m-1}{m}f_{m-1}\left(\frac{mx-1}{m-1}\right)$. Because $f_{m-1}$ is non-increasing, 
we have that $g_m(x,1) \leq h_m(x,1)$. 
\item $x < \frac{1}{m}$\\
Now, by the budget constraint, the maximum possible value of $\alpha$ is $mx$. 
We want to show that $g_m(x,mx) \leq h_m(x,mx)$. To see this, suppose the adversary bids $x$ on the first item (corresponding to 
the choice $\alpha = mx$) and loses. 
Bidder~1 then makes a profit of $\frac{1}{m}-x$ on the first item. The adversary can subsequently play the following strategy: bid $x$ on 
every item until he wins an item. Of course, Bidder~1 will then win the remaining items for free after the adversary wins one item, 
because the budget of the adversary has then fallen to $0$. Now, if Bidder~1's risk-free strategy is to win all of the items at price at least $x$, her (absolute) 
profit in this subgame is at most $\left(\frac{m-1}{m}-(m-1)x\right)$. If instead the adversary wins the $k^{\text{th}}$ item, 
where $1 \leq k \leq m-1$, then Bidder~1's profit is at most $\frac{m-2}{m}-(k-1)x$, which is maximized at $k=1$ with 
value $\frac{m-2}{m}$. In both cases, Bidder~1's total 
profit on all $m$ items is either at most $1-mx$ or at most $\frac{m-1}{m}-x$. But these are both at most the profit Bidder~1 gets (namely, $\frac{m-1}{m}$) if 
she gives up the first item at price $x$ and wins the remaining $m-1$ items for free. Thus $g_m(x,mx)\le h_m(x,mx)$.
\end{itemize}

Set $\alpha_{max}$ to be the highest possible value of $\alpha$ for which both $g_m(x,\alpha)$ and $h_m(x,\alpha)$ are well-defined for all $x$.
Therefore $\alpha_{max} = \min(1,mx)$. We have shown that $g_m(x,0) \geq h_m(x,0)$ 
and $g_m(x,\alpha_{max}) \leq h_m(x,\alpha_{max})$. Then, because $g_m(x,\alpha)$ is non-increasing in $\alpha$ and $h_m(x,\alpha)$ is 
non-decreasing in $\alpha$ for fixed $x$, our upper bound of $\max(g_m(x,\alpha),h_m(x,\alpha))$ is minimized at any 
bid $\bar\alpha\cdot \frac{1}{m}$ such 
that $0 \leq \bar\alpha \leq \alpha_{max}$ and $g_m(x,\bar\alpha)=h_m(x,\bar\alpha)$. This is also precisely equal to 
a risk-free bid $\alpha^*\cdot \frac{1}{m}$ placed by Bidder~1 on the first item, since from her perspective, if the adversary plays a 
best response then she gets the \textit{minimum} of $g_m(x,\alpha^*)$ and $h_m(x,\alpha^*)$, and this minimum is 
maximized when they are equal.

We now use the above observations to establish an upper bound on the highest guaranteed profit of a 
risk-free strategy. For an appropriately chosen bid $\tilde{\alpha}\frac{1}{m}$, we prove that
both $g_m(x,\tilde{\alpha})$ and $h_m(x,\tilde{\alpha})$ are well-defined for all $x \in [\frac{1}{m^2}, \frac{m-1}{m}]$. We then 
prove that both these values are at most $f(x) + \frac{1}{\sqrt{m}}$. 
The facts rely on the four technical claims below.

The first two claims show that $\tilde{\alpha} = 1 - 2m(1-\sqrt{x}) + 2\sqrt{m(m-1)}(1-\sqrt{x})$.
is a feasible choice for $\tilde{\alpha}$: specifically, $0\le \tilde{\alpha} \le \alpha_{max}$. Let $\tilde{\alpha} = 1 - 2m(1-\sqrt{x}) + 2\sqrt{m(m-1)}(1-\sqrt{x})$. We show the following.

\begin{claim} \label{clm:price1}
For any $x \in [\frac{1}{m^2},\frac{m-1}{m}]$, $0 \leq \tilde{\alpha}$.
\end{claim}

\begin{proof}
To prove that $\tilde{\alpha}$ is non-negative, for $x \in [\frac{1}{m^2},\frac{m-1}{m}]$, we require that
\begin{equation*}
   1-2m(1-\sqrt{x}) + 2\sqrt{m(m-1)}(1-\sqrt{x}) \geq 0 
\end{equation*}    
Equivalently, we want to show that
   \begin{equation*}
2\cdot(1-\sqrt{x})\cdot\left(m-\sqrt{m(m-1)}\right) \leq 1.
\end{equation*}
Since $x \geq \frac{1}{m^2}$, we have
\begin{alignat*}{2}
    2\cdot (1-\sqrt{x})\cdot\left(m-\sqrt{m(m-1)}\right) &\leq 2\cdot \left(1-\sqrt{\frac{1}{m^2}}\right)\cdot \left(m-\sqrt{m(m-1)}\right) &\\
     &= 2\cdot \left(\frac{m-1}{m}\right)\cdot \left(m-\sqrt{m(m-1)}\right) &\\
    &= 2\cdot\left((m-1) - (m-1)\sqrt{\frac{m-1}{m}}\right) &\\
    &= 2\cdot (m-1)\cdot \left(1-\sqrt{\frac{m-1}{m}}\right).
\end{alignat*}
Thus we must show that $2(m-1)\left(1-\sqrt{\frac{m-1}{m}}\right) \leq 1$. Equivalently we require that
\begin{alignat*}{2}
\frac{m-1}{m} &\ge \left(1-\frac{1}{2\cdot(m-1)}\right)^2 &\\
& = 1-\frac{1}{m-1}+ \frac{1}{4\cdot(m-1)^2} &\\
& = \frac{m-2}{m-1}+ \frac{1}{4\cdot(m-1)^2}
\end{alignat*}  
To prove this, observe that $\frac{m-1}{m}- \frac{m-2}{m-1}=\frac{1}{m\cdot(m-1)}$, for all $m\ge 2$. Therefore
\begin{alignat*}{2}
\frac{m-1}{m} &= \frac{m-2}{m-1}+\frac{1}{m\cdot(m-1)} &\\
&\ge \frac{m-2}{m-1}+\frac{1}{4\cdot(m-1)^2}
\end{alignat*}
where the inequality holds for all $m \geq 2$. This proves that $\tilde{\alpha} \geq 0$. 
\end{proof}

\begin{claim} \label{clm:price2}
For any $x \in [\frac{1}{m^2},\frac{m-1}{m}]$, $\tilde{\alpha} \leq \alpha_{max}$.
\end{claim}

\begin{proof}
We partition the proof into two cases.\medskip

\noindent (i) Assume that $x \geq \frac{1}{m}$. It follows that $\alpha_{max} = 1$ and so we must show
   \begin{equation*}
    1-2m(1-\sqrt{x}) + 2\sqrt{m(m-1)}(1-\sqrt{x})\leq 1 
\end{equation*}  
Equivalently, we require  
       \begin{equation*}
2\cdot (1-\sqrt{x})\cdot \left(m-\sqrt{m(m-1)}\right) \geq 0
\end{equation*}
But this inequality holds because, by assumption, we have $x\in [\frac{1}{m}, \frac{m-1}{m}]$.  In particular, for $m\geq 2$,
we have both $(1-\sqrt{x}) > 0$ and $(2m-2\sqrt{m(m-1)}) > 0$.\medskip

\noindent (ii) Assume that  $x < \frac{1}{m}$.
 It now follows that $\alpha_{max} = m\cdot x$ and so we must show
   \begin{equation*}
    1-2m(1-\sqrt{x}) + 2\sqrt{m(m-1)}(1-\sqrt{x}) \leq m\cdot x 
    \end{equation*}
 Equivalently, we require     
      \begin{equation*}
2\cdot (1-\sqrt{x})\cdot\left(m-\sqrt{m(m-1)}\right) + m\cdot x - 1 \geq 0.
\end{equation*} 
To see this holds, observe that
\begin{eqnarray*}
\lefteqn{2\cdot (1-\sqrt{x})\cdot \left(m-\sqrt{m(m-1)}\right) + m\cdot x - 1}\\
&=& mx -2\sqrt{x}\cdot \left(m-\sqrt{m(m-1)}\right)+ \left(2\cdot(m-\sqrt{m(m-1)})-1\right) \\
&=& mx -2\sqrt{mx}\cdot \left(\sqrt{m}-\sqrt{m-1}\right)+ \left(2m-1- 2\cdot \sqrt{m(m-1)}\right) \\
&=& \left(\sqrt{mx}-\left(\sqrt{m}-\sqrt{m-1}\right)\right)^2 \\
&\geq& 0
\end{eqnarray*}
The claim follows.
\end{proof}

So we have a feasible choice for $\tilde{\alpha}$.
To complete the proof that $f_m(x) \leq f(x) + \frac{1}{\sqrt{m}}$, we now show that both 
$g_m(x,\tilde{\alpha})$ and $h_m(x,\tilde{\alpha})$
 are at most $f(x) + \frac{1}{\sqrt{m}}$. 

\begin{claim} \label{clm:ub1}
$g_m(x,\tilde{\alpha}) \leq f(x) + \frac{1}{\sqrt{m}}$.
\end{claim}

\begin{proof}
By our induction hypothesis,
\begin{align*}
    g_m(x,\tilde{\alpha}) &= \frac{1-\tilde{\alpha}}{m} + \frac{m-1}{m}\cdot f_{m-1}\left(\frac{mx}{m-1}\right) &\\
    &\leq \frac{1-\tilde{\alpha}}{m} + \frac{m-1}{m}\cdot \left(f\left(\frac{mx}{m-1}\right)+\frac{1}{\sqrt{m-1}}\right)
\end{align*}
Thus it suffices to show
\begin{align*}
    \frac{1-\tilde{\alpha}}{m} + \frac{m-1}{m}\cdot \left(f\left(\frac{mx}{m-1}\right)+\frac{1}{\sqrt{m-1}}\right) &\leq f(x) + \frac{1}{\sqrt{m}}
\end{align*}
Equivalently, we require
\begin{align*}
    f(x) + \frac{1}{\sqrt{m}} - \frac{1-\tilde{\alpha}}{m} - \frac{m-1}{m}\cdot \left(f\left(\frac{mx}{m-1}\right)+\frac{1}{\sqrt{m-1}}\right) &\geq 0
\end{align*}
To prove this, observe that
\begin{align*}
   f(x) &+ \frac{1}{\sqrt{m}} - \frac{1-\tilde{\alpha}}{m} - \frac{m-1}{m}\left(f\left(\frac{mx}{m-1}\right)+\frac{1}{\sqrt{m-1}}\right) \\
    &= \Big(1 + x - 2\sqrt{x}\Big) - \frac{1-\tilde{\alpha}}{m}  - \frac{m-1}{m} \left( 1 + \frac{mx}{m-1} -2\sqrt{\frac{mx}{m-1}} \right) \\
    &\qquad\qquad + \frac{1}{\sqrt{m}} - \frac{\sqrt{m-1}}{m} \\
    &= 1 + x - 2\sqrt{x} - \frac{1}{m} + \frac{\tilde{\alpha}}{m} - 1 + \frac{1}{m} - x + 2\sqrt{\frac{(m-1)x}{m}} \\
   &\qquad\qquad + \frac{1}{\sqrt{m}} - \frac{\sqrt{m-1}}{m} \\
    &= -2\sqrt{x} + \frac{\tilde{\alpha}}{m} + 2\sqrt{\frac{(m-1)x}{m}} 
 + \frac{1}{\sqrt{m}}\left(1-\sqrt{\frac{m-1}{m}}\right) 
\end{align*} 

By the definition of $\tilde{\alpha}$ we then have that
\begin{align*}
 f(x) &+ \frac{1}{\sqrt{m}} - \frac{1-\tilde{\alpha}}{m} - \frac{m-1}{m}\left(f\left(\frac{mx}{m-1}\right)+\frac{1}{\sqrt{m-1}}\right)\\
    &=-2\sqrt{x} + \left(\frac{1}{m} -2+2\sqrt{x} + \frac{2\sqrt{m-1}}{\sqrt{m}}\left(1-\sqrt{x}\right)\right) \\
    &\qquad\qquad + 2\sqrt{\frac{(m-1)x}{m}}+ \frac{1}{\sqrt{m}}\left(1-\sqrt{\frac{m-1}{m}}\right) \\
    &= \frac{1}{m} - 2 + 2\sqrt{\frac{m-1}{m}}(1-\sqrt{x}) + 2\sqrt{\frac{m-1}{m}}\sqrt{x}
  + \frac{1}{\sqrt{m}}\left(1-\sqrt{\frac{m-1}{m}}\right) \\
    &= \frac{1}{m} - 2\cdot\left(1-\sqrt{\frac{m-1}{m}}\right)  + \frac{1}{\sqrt{m}}\left(1-\sqrt{\frac{m-1}{m}}\right) \\
    &= \frac{1}{m} + \left(\frac{1}{\sqrt{m}}-2\right)\cdot \left(1-\sqrt{\frac{m-1}{m}}\right)
\end{align*}
Now set $q(m) = \frac{1}{m} + \left(\frac{1}{\sqrt{m}}-2\right)\cdot \left(1-\sqrt{\frac{m-1}{m}}\right)$. 
Clearly, to show that $q(m)$ is non-negative, it suffices to show that $(\sqrt{m} + \sqrt{m-1})  \cdot m\cdot q(m)$ is non-negative. 
To do this, note that
\begin{align*}
(\sqrt{m} &+ \sqrt{m-1})  \cdot m\cdot q(m) \\
    &= (\sqrt{m} + \sqrt{m-1})  \cdot\bigg(1 +(1-2\sqrt{m})\cdot(\sqrt{m} - \sqrt{m-1})\bigg)\\
    &= (\sqrt{m} + \sqrt{m-1}) - (2\sqrt{m}-1)\cdot(\sqrt{m} - \sqrt{m-1})\cdot (\sqrt{m} + \sqrt{m-1})\\
    &= (\sqrt{m} + \sqrt{m-1})-(2\sqrt{m}-1)\cdot 1\\
    &= 1 + \sqrt{m-1} - \sqrt{m}\\
    &\ge 0
\end{align*}
Here the final inequality holds for $m \geq 2$. Thus $g_m(x,\tilde{\alpha}) \leq f(x) + \frac{1}{\sqrt{m}}$.
\end{proof}

\begin{claim} \label{clm:ub2}
$h_m(x,\tilde{\alpha}) \leq f(x) + \frac{1}{\sqrt{m}}$.
\end{claim}

\begin{proof}
We now prove  $h_m(x,\tilde{\alpha}) \leq f(x) + \frac{1}{\sqrt{m}}$. By our induction hypothesis,
\begin{eqnarray*}
    h_m(x,\tilde{\alpha}) &=& \frac{m-1}{m}\cdot f_{m-1}\left(\frac{mx-\tilde{\alpha}}{m-1}\right) \\
    &\leq& \frac{m-1}{m}\cdot \left(f\left(\frac{mx-\tilde{\alpha}}{m-1}\right)+\frac{1}{\sqrt{m-1}}\right)
\end{eqnarray*}
Hence, we want to show that
\begin{equation*}
    \frac{m-1}{m}\cdot \left(f\left(\frac{mx-\tilde{\alpha}}{m-1}\right)+\frac{1}{\sqrt{m-1}}\right) \leq f(x) + \frac{1}{\sqrt{m}}
\end{equation*}
Equivalently, we require
\begin{align*}
    f(x) - \frac{m-1}{m}\cdot f\left(\frac{mx-\tilde{\alpha}}{m-1}\right) + \frac{1}{\sqrt{m}} - \frac{m-1}{m}\left(\frac{1}{\sqrt{m-1}}\right) &\geq 0
\end{align*}
To begin, let's show that the first two terms are equal; that is, $f(x) - \frac{m-1}{m}\cdot f\left(\frac{mx-\tilde{\alpha}}{m-1}\right)=0$.
\begin{alignat*}{2}
f(x) &- \frac{m-1}{m}\cdot f\left(\frac{mx-\tilde{\alpha}}{m-1}\right) = \left(1+x-2\sqrt{x}\right) - \frac{m-1}{m}\cdot f\left(\frac{mx-\tilde{\alpha}}{m-1}\right) &\\
&= \left(1+x-2\sqrt{x}\right) -\frac{m-1}{m}\left(1+\frac{mx-\tilde{\alpha}}{m-1}-2\sqrt{\frac{mx-\tilde{\alpha}}{m-1}}\right) &\\
&= \left(1+x-2\sqrt{x}\right) -\left(1-\frac{1}{m}+x - \frac{\tilde{\alpha}}{m}-2\cdot\frac{m-1}{m}\cdot \sqrt{\frac{mx-\tilde{\alpha}}{m-1}}\right) &\\
&= -2\sqrt{x} + \frac{1}{m}+ \frac{\tilde{\alpha}}{m}+2\cdot\frac{m-1}{m}\cdot \sqrt{\frac{mx-\tilde{\alpha}}{m-1}}
\end{alignat*}
To prove the RHS is indeed $0$ we must show that
\begin{equation*}
   2\sqrt{x} - \frac{1+\tilde{\alpha}}{m}= 2\cdot\frac{m-1}{m}\cdot \sqrt{\frac{mx-\tilde{\alpha}}{m-1}}
\end{equation*}
First observe that both sides are nonnegative, because $\sqrt{x} > \frac{1}{m}$ and $\tilde{\alpha} < \alpha_{max}$. Next
multiply each side by $m$ and take the square. This leads to 
\begin{alignat*}{2}
\lefteqn{\left(2m\cdot \sqrt{x} - (1+\tilde{\alpha})\right)^2 - 4\cdot(m-1)\cdot (mx-\tilde{\alpha})} \\
&= \left((2m\cdot \sqrt{x} -1)-\tilde{\alpha})\right)^2 - 4\cdot(m-1)\cdot (mx-\tilde{\alpha}) \\
&= \left(\tilde{\alpha}^2 - (4m\cdot \sqrt{x} -2)\tilde{\alpha} + (4m^2x-4m\sqrt{x}+1) \right) \\
&\qquad + \left(4(m-1)\cdot \tilde{\alpha}-4(m-1)mx\right) \\
&= \tilde{\alpha}^2 + (2-4m\cdot \sqrt{x} +4(m-1))\cdot \tilde{\alpha} + (4m^2x-4m\sqrt{x}+1-4(m-1)mx) \\
&= \left(\tilde{\alpha}^2 - (4m\cdot \sqrt{x} -2)\tilde{\alpha} + (4m^2x-4m\sqrt{x}+1) \right) \\
&\qquad + \left(4(m-1)\cdot \tilde{\alpha}-4(m-1)mx\right) \\
&= \tilde{\alpha}^2 + (4(m-1)-4m\cdot \sqrt{x} +2)\cdot \tilde{\alpha} + (4m^2x-4m\sqrt{x}+1-4(m-1)mx) \\
&= \tilde{\alpha}^2 + (4m-4m\cdot \sqrt{x} -2)\cdot \tilde{\alpha} + (4mx-4m\sqrt{x}+1) \\
&=0
\end{alignat*}
The final equality does follow as $z = \tilde{\alpha}$ is indeed a solution to the quadratic equation 
$z^2 + (4m-4m\sqrt{x}-2)z + (4mx-4m\sqrt{x}+1) = 0$. 
Putting this all together then gives
\begin{alignat*}{2}
    f(x) - \frac{m-1}{m}\cdot f\left(\frac{mx-\tilde{\alpha}}{m-1}\right) &+ \frac{1}{\sqrt{m}} - \frac{m-1}{m}\left(\frac{1}{\sqrt{m-1}}\right) \\
    &= 0 + \frac{1}{\sqrt{m}} - \frac{m-1}{m}\left(\frac{1}{\sqrt{m-1}}\right) \\
    &= \frac{1}{\sqrt{m}}-\frac{\sqrt{m-1}}{m} \\
    &= \frac{1}{\sqrt{m}}\left(1-\sqrt{\frac{m-1}{m}}\right) \\
    &\ge 0
\end{alignat*}
The claim follows.
\end{proof}

So we have the following upper bounds on $g_m$ and $h_m$: $g_m(x,\tilde{\alpha}) \leq f(x) + \frac{1}{\sqrt{m}}$ and $h_m(x,\tilde{\alpha}) \leq f(x) + \frac{1}{\sqrt{m}}$. 
Since we also have $f_m(x) \leq \max(g_m(x,\tilde{\alpha}),h_m(x,\tilde{\alpha}))$, we have $f_m(x) \leq f(x) + \frac{1}{\sqrt{m}}$ 
when $\frac{1}{m^2}\leq x\leq \frac{m-1}{m}$. With this third case (intermediate budget) completed so is the proof of Theorem \ref{thm:xos-ub}.
\end{proof}

\subsection{Risk-Free Bidding in Simultaneous Auctions} \label{ss:simultaneous}

In this section we consider risk-free bidding in a simultaneous auction.
For a 
budgeted adversary in a simultaneous auction,
the analogue of budget-constrained bidding is that the {\em sum} of the adversary's bids on the items is at most the budget $B$. 
Intuitively, a budgeted adversary is weaker in a simultaneous auction than in a sequential auction. This is because,
in a sequential auction, an adversary has the option to ``overbid" on an item but suffers no consequence {\em if he loses the item}.
The issue then is whether or not the resultant broader range of strategies available to an adversary in a sequential auction makes 
it provably more powerful than the corresponding adversary in a simultaneous auction. 
We show this in the following theorems. We begin by analyzing the second-price case.
\begin{theorem} \label{thm:simul-second}
    The two-player simultaneous second-price auction with a normalized XOS valuation function and an 
    adversary with normalized budget $B \in (0,1)$ has a risk-free strategy for Bidder~1 that 
    guarantees a profit of at least $(1-B)$.
\end{theorem}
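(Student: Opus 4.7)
The plan is to adapt the non-adaptive additive-representative strategy from the proof of Theorem~\ref{thm:xos-lb}, but tuned for the second-price rule. Since $v$ is XOS, pick an additive function $\gamma^*$ achieving $v(I) = \gamma^*(I) = 1$, so that $v(S) \geq \gamma^*(S)$ for every $S \subseteq I$. Bidder~1's risk-free strategy will be to bid $b_1^t = \gamma^*(a_t)$ on each item $a_t$ simultaneously. Note this strategy uses no knowledge of $B$.

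Next I would partition the items into the set $I_1$ won by Bidder~1 and $I_2 = I \setminus I_1$ won by the adversary. The key observation, specific to the second-price case, is a two-sided use of the adversary's bids. On the one hand, for every lost item $t \in I_2$ the adversary's bid must satisfy $b_2^t \geq b_1^t = \gamma^*(a_t)$, and summing gives $\sum_{t \in I_2} b_2^t \geq \gamma^*(I_2)$. On the other hand, the budget constraint $\sum_{t \in I} b_2^t \leq B$ then forces
\begin{equation*}
\sum_{t \in I_1} b_2^t \ \leq\ B - \sum_{t \in I_2} b_2^t \ \leq\ B - \gamma^*(I_2).
\end{equation*}
This bound on the adversary's spending on items that Bidder~1 won is exactly what the second-price payment rule needs, since Bidder~1 pays $b_2^t$ (not her own bid) for each $t \in I_1$.

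Finally I would combine these ingredients with the XOS inequality and additivity of $\gamma^*$:
\begin{equation*}
\pi_1 \ =\ v(I_1) - \sum_{t \in I_1} b_2^t \ \geq\ \gamma^*(I_1) - (B - \gamma^*(I_2)) \ =\ \gamma^*(I) - B \ =\ 1 - B,
\end{equation*}
which is the claimed bound. There is no real obstacle here: the whole argument works because the second-price rule converts the adversary's global budget constraint directly into a bound on Bidder~1's payments, so no randomization, adaptivity, or careful tuning of a scaling factor (as in the sequential case with $\sqrt{B}$) is required. The only subtle point worth flagging is tie-breaking: if bids are compared with strict inequality, one would bid $\gamma^*(a_t) + \epsilon$ and take $\epsilon \to 0$, which does not affect the bound.
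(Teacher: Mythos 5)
Your proof is correct and follows essentially the same route as the paper: bid $\gamma^*(a_t)$ on each item, use the XOS inequality $v(I_1)\geq\gamma^*(I_1)$, and convert the adversary's global budget constraint into a bound on Bidder~1's second-price payments. Your bookkeeping (bounding $\sum_{t\in I_1}b_2^t$ by $B-\gamma^*(I_2)$ rather than assuming the adversary bids exactly $\gamma^*_j{}^+$ on won items) is a slightly more robust phrasing of the same computation, since it does not presuppose a best response.
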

\begin{proof}
We prove this theorem by using the following strategy for Bidder~1. Bidder~1 bids truthfully according to the additive function $\gamma^*$ defined in 
Section \ref{ss:xos-lb} -- that is, for each item $a_j \in I$, she bids $\gamma^*_j =\gamma^*(\{a_j\})$. We show that, if Bidder~1 plays according to this strategy, then 
for any feasible strategy of the adversary, Bidder~1 makes a profit of at least $(1-B)$. In particular, we consider the adversary's best response to this strategy.

Suppose the adversary's best response is to make a sequence $b_1,\ldots,b_m$ of bids on the respective items. Clearly, in a best response the adversary 
will not bid more than $\gamma^*(\{a_j\})^+$. Let $I_1\subseteq I$ and $I_2\subseteq I$ be the set of items allocated to Bidder~1 and Bidder~2 respectively. 
Then Bidder~1's profit is given by
\begin{align*}
    \pi_1 &= v(I_1) - \sum_{j:a_j\in I_1}b_j \geq \sum_{j:a_j\in I_1}(\gamma^*_j - b_j) & \\
    &= \sum_{j:a_j\in I_1}(\gamma^*_j - b_j) + \sum_{j:a_j\in I_2}(\gamma^*_j - b_j) &\\
    &= \sum_{j:a_j\in I}\gamma^*_j - \sum_{j:a_j\in I}b_j & \\
    &= 1 - \sum_{j:a_j\in I}b_j \geq 1 - B
\end{align*}
Here the first inequality follows by definition of an XOS function; the second equality arises because the adversary bids $b_j=\gamma^*_j$ on each item $j \in I_2$ that 
he wins; the fourth equality follows by definition of $\gamma^*$ and the fact the auction is normalized; the second inequality 
follows from the budget constraint. 
\end{proof}
Observe that $(1-\sqrt{B})^2< 1-B$, for all $B \in (0,1)$. Ergo, the risk-free profitability of Bidder~$1$ is strictly greater in a second-price simultaneous auction than in a
second-price sequential auction. Conversely the adversary is strictly weaker in the  second-price simultaneous auction.

Next let's consider the case of {\em first-price} simultaneous auctions. Note that the proof of Theorem~\ref{thm:simul-second} was via the use of a pure
strategy for Bidder~$1$. For first-price simultaneous auctions it is not possible to rely on a pure strategy to beat the profit bound of $(1-\sqrt{B})^2$;
to do so, the bidder must use a randomized strategy. To verify this, the following 
claim shows that in the {\em uniform additive} simultaneous auction, no deterministic strategy for Bidder~$1$ can guarantee a profit that is asymptotically 
greater than $(1-\sqrt{B})^2$.

\begin{claim}
For any pure strategy of Bidder~1, there exists a strategy for the adversary that (asymptotically) restricts Bidder~1's profit to $(1-\sqrt{B})^2$.
\end{claim}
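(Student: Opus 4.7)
The plan is to exhibit, for each pure bid vector $\vec{b}=(b_1,\ldots,b_m)$ of Bidder~1 in the uniform additive simultaneous auction $\mathcal{A}_m$, a simple greedy response by the adversary and to upper-bound the resulting profit by a quantity tending to $(1-\sqrt{B})^2$ as $m\to\infty$. First, we may assume $0 \le b_j \le \tfrac{1}{m}$ for every $j$, since any bid exceeding $\tfrac{1}{m}$ is strictly dominated (the auctions on different items are independent, and Bidder~1's marginal value for each item is $\tfrac{1}{m}$). Sort the bids in nondecreasing order $b_{(1)} \le b_{(2)} \le \cdots \le b_{(m)}$.

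The adversary's strategy will be a greedy knapsack: bid just above $b_{(j)}$ on the $j$-th cheapest item for $j=1,\ldots,k$ and bid $0$ on the remaining items, where $k$ is the largest index satisfying $\sum_{j=1}^{k} b_{(j)} \le B$. (If $\sum_{j=1}^{m} b_{(j)} \le B$ the adversary wins every item and $\pi_1=0$, so we may assume $k<m$.) Let $T := b_{(k+1)}$. Two inequalities then drive the argument. Since $b_{(j)} \ge T$ for every $j > k$,
$$\pi_1 \;=\; \sum_{j=k+1}^{m}\left(\tfrac{1}{m} - b_{(j)}\right) \;\le\; (m-k)\left(\tfrac{1}{m} - T\right).$$
Moreover, by the choice of $k$ we have $\sum_{j=1}^{k+1} b_{(j)} > B$ while each summand is at most $T$, so $(k+1)T > B$ and hence $k \ge B/T - 1$. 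Substituting gives
$$\pi_1 \;\le\; \left(m - \tfrac{B}{T} + 1\right)\left(\tfrac{1}{m} - T\right).$$

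The remaining step is a one-variable optimization of $g(T) := (m-B/T+1)(1/m-T)$ over $T\in(0,1/m]$. A direct calculation yields $g'(T) = -(m+1) + B/(mT^2)$, so the unique critical point is $T^{*}=\sqrt{B/(m(m+1))}$ and
$$g(T^{*}) \;=\; 1 + \tfrac{1}{m} + B - 2\sqrt{B\cdot\tfrac{m+1}{m}} \;=\; (1-\sqrt{B})^2 + O(1/m),$$
which tends to $(1-\sqrt{B})^2$ as $m\to\infty$. I expect the main technical subtlety to lie in the second inequality above: one must include the overflow term $b_{(k+1)}$ in the budget estimate in order to obtain the clean bound $k \ge B/T - 1$, and the remainder is direct algebra and an asymptotic expansion. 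Notice that the adversary's strategy here is deterministic and uses no randomization, which also foreshadows why Bidder~1 must randomize in order to guarantee a profit strictly exceeding $(1-\sqrt{B})^2$ in the first-price simultaneous setting.
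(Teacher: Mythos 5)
Your proposal is correct and follows essentially the same argument as the paper: sort Bidder~1's bids, have the adversary greedily outbid the cheapest items until the budget is exhausted, derive $k \ge B/T - 1$ from the overflow term, bound the profit by $\left(m - \tfrac{B}{T} + 1\right)\left(\tfrac{1}{m} - T\right)$, and optimize at $T = \sqrt{B/(m(m+1))}$. The only differences are cosmetic (your explicit dominance remark for bids above $\tfrac{1}{m}$ and the sharper $O(1/m)$ error term versus the paper's stated $O(1/\sqrt{m})$).
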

\begin{proof}
    Let $b_1^1,\ldots,b_1^m$ be Bidder~1's bids on the $m$ items. We may assume without loss of generality that $b_1^i \leq b_1^j$ whenever $i < j$. 
    Bidder~2's strategy is to win $k^*$ items, where $k^* = \max\{k : \sum_{i=1}^kb_1^i < B\}$. Let $p^*$ be the price of the lowest-indexed item that 
    Bidder~1 wins, i.e., $p^* = b_1^{k^*+1}$. Since we maximize over all possible $k$, we know that the adversary cannot afford to win the entire 
    set $\{a_1,\ldots,a_{k^*+1}\}$. Let $P$ be the total price paid by the adversary. This implies that $P$ is more than $B-p^*$, otherwise the adversary 
    could have won another item. So we have $P > B-p^*$.
    
    On the other hand, the total price paid by the adversary is at most
    \begin{equation*}
        P = b_1^1 + \ldots + b_1^{k^*} \leq k^*\cdot b_1^{k^*+1} = k^*p^*.
    \end{equation*}
    Combining the inequalities, we have $k^* > \frac{B-p^*}{p^*}$. So the number of items that the adversary wins is at least $\frac{B-p^*}{p^*}$; thus, Bidder~1 wins at 
    most $m-\frac{B-p^*}{p^*}= m-\frac{B}{p^*}+1$ items. Since the items are ordered by Bidder~1's bids, her price for each of these items is at least $p^*$. 
    Consequently, because her valuation function is uniform additive, her profit is at most
    \begin{align*}
        \pi_1 \leq (m-\frac{B}{p^*}+1)\cdot (\frac{1}{m}-p^*) 
    \end{align*}
    It is easy to verify that this is maximized when $p^* = \sqrt{\frac{B}{m(m+1)}}$, and that the maximum value is $(1-\sqrt{B})^2 + O(\frac{1}{\sqrt{m}})$.
\end{proof}

Interestingly, in the other direction, it is also true that no deterministic strategy for the adversary can guarantee that Bidder~1 makes a profit that is less than $(1-B)$.
\begin{claim}
For any pure strategy of the adversary, there exists a strategy for Bidder~1 that guarantees a profit of $(1-B)$ for any XOS bidder.
\end{claim}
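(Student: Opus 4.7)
The plan is to exploit the XOS structure exactly as in Theorem~\ref{thm:xos-lb}, but now tailor Bidder~1's bids to the adversary's known pure strategy (since the claim only asserts existence of a response strategy to each fixed pure adversary strategy). Let $\gamma^*$ be the additive function guaranteed by the XOS representation that achieves $\gamma^*(I)=v(I)=1$ and satisfies $v(S)\ge \gamma^*(S)$ for all $S\subseteq I$, as in Section~\ref{ss:xos-lb}. Let $b_2^1,\dots,b_2^m$ be the adversary's pure bids, with $\sum_j b_2^j \le B$.

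First I define Bidder~1's response: for a small $\epsilon>0$, on each item $a_j$ such that $\gamma^*_j \ge b_2^j$ she bids $b_2^j+\epsilon$, and on every other item she bids $0$. Let $I_1 = \{j : \gamma^*_j \ge b_2^j\}$. Then by construction Bidder~1 wins exactly the items in $I_1$, while the adversary wins the complement.

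Next I bound the profit. Applying the XOS lower bound $v(I_1)\ge \gamma^*(I_1)=\sum_{j\in I_1}\gamma^*_j$ and the definition of Bidder~1's bids,
\begin{align*}
\pi_1 \;=\; v(I_1) - \sum_{j\in I_1}(b_2^j+\epsilon)
&\;\ge\; \sum_{j\in I_1}(\gamma^*_j - b_2^j) - \epsilon m \\
&\;\ge\; \sum_{j\in I}(\gamma^*_j - b_2^j) - \epsilon m \\
&\;=\; 1 - \sum_{j\in I} b_2^j - \epsilon m \;\ge\; (1-B) - \epsilon m,
\end{align*}
where the second inequality holds because every term dropped (the $j\notin I_1$) satisfies $\gamma^*_j - b_2^j < 0$, and the last inequality uses the adversary's budget constraint. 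Taking $\epsilon\to 0$ yields the desired profit bound $1-B$ (or one can instead bid $b_2^j$ with appropriate tie-breaking, as is standard in these auction models).

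The main (minor) obstacle is simply recognizing that, in contrast to Theorem~\ref{thm:xos-lb} where the bids scale uniformly by $\sqrt{B}$, here Bidder~1 targets precisely those items that are profitable relative to $\gamma^*$ against the fixed bid vector, which is exactly what a first-price best response against a known pure strategy should do. One should also note that this strategy crucially uses knowledge of the adversary's pure bid vector, which is why it does not contradict the earlier observation that no \emph{universal} deterministic strategy for Bidder~1 can guarantee better than $(1-\sqrt{B})^2 + O(1/\sqrt{m})$; the gap between this claim and the previous one is precisely why randomization is required for a risk-free strategy in the first-price simultaneous setting.
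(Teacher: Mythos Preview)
Your proof is correct and follows essentially the same approach as the paper: both have Bidder~1 bid $b_2^{j+}$ on exactly those items with $\gamma^*_j \ge b_2^j$ (the paper uses strict inequality, a harmless tie-breaking difference), then lower-bound $v(I_1)$ by $\gamma^*(I_1)$, add back the non-positive terms $\gamma^*_j-b_2^j$ for $j\notin I_1$ to complete the sum to $1-\sum_j b_2^j$, and apply the budget constraint. The only cosmetic difference is that you carry the $\epsilon m$ term and pass to the limit, whereas the paper absorbs this into its $b_2^{i+}$ shorthand.
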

\begin{proof}
By the bidding constraint, the sum of the adversary's bids is at most $B$. Let $b_2^1,\ldots,b_2^m$ be the bids 
made by the adversary on the items. Bidder~1 simply bids $b_2^{i+}$ on each item $a_i$ as long as $b_2^i$ is less than $\gamma^*_i$ and wins 
this set of items. Bidder~1's profit is then
\begin{eqnarray*}
v_1(I_1)-\sum_{i:b_2^i< \gamma^*_i} b_2^i &\ge& \sum_{i:b_2^i< \gamma^*_i} (\gamma^*_i - b_2^i)\\
&\ge& \sum_{i:b_2^i< \gamma^*_i} (\gamma^*_i - b_2^i) + \sum_{i:b_2^i\ge \gamma^*_i} (\gamma^*_i - b_2^i) \\
&=& \sum_{i:a_i\in I} (\gamma^*_i - b_2^i)\\
&=& 1 - \sum_{i:a_i\in I} b_2^i\\
&\ge& 1 - B
\end{eqnarray*}
Here the first inequality arises as Bidder~$1$ has an XOS valuation; the second equality follows by definition of $\gamma^*$ and the fact the auction is normalized; 
the last inequality 
follows from the budget constraint.
\end{proof}
Due to this asymmetry in pure strategies, simultaneous first-price auctions against an adversary have no equilibrium in deterministic strategies, 
and we must introduce randomization to improve the lower bound. 
In fact, there is a randomized strategy for Bidder~1 that guarantees an expected profit of at least $\frac12 (1-B)^2$ when her valuation function is XOS. 
The function $\frac{1}{2}(1-B)^2$ is greater than $(1-\sqrt{B})^2$ for $B > 3-2\sqrt{2}$, which is approximately $0.17$, so the upper bound from the first-price sequential auction case does not 
apply to first-price simultaneous auctions.

\begin{theorem} \label{thm:simul-first}
    The two-player simultaneous first-price auction with a normalized XOS valuation function and an 
    adversary with normalized budget $B \in (0,1)$ has a (randomized) risk-free strategy for Bidder~1 that 
    guarantees a profit of at least $\frac{(1-B)^2}{2}$ in expectation.
\end{theorem}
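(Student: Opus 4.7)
The plan is to exhibit a one-parameter randomized bidding strategy for Bidder~1, based on the additive ``witness'' $\gamma^*$ from the XOS representation, and then combine the XOS lower-bound identity $v(S)\ge \gamma^*(S)$ with Jensen's inequality applied against the adversary's budget constraint.

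First I would recall the function $\gamma^*$ used in Theorem~\ref{thm:xos-lb}: an additive function on $I$ with $\gamma^*(I)=v(I)=1$ and $v(S)\ge \gamma^*(S)$ for every $S\subseteq I$. Writing $\gamma^*_j=\gamma^*(\{a_j\})$, the proposed strategy for Bidder~1 is to draw a scalar $t\sim U[0,1]$ uniformly at random and bid $t\cdot\gamma^*_j$ on item $a_j$. Note that the distribution, but not the realization of $t$, is public; so any adversary strategy reduces (by linearity of expectation) to choosing a fixed bid vector $(b_1,\dots,b_m)$ with $\sum_j b_j\le B$. Given such bids, let $r_j=b_j/\gamma^*_j$ (and set $r_j=+\infty$ if $\gamma^*_j=0$); then Bidder~1 wins $a_j$ exactly when $t>r_j$, so the won set is $I_1(t)=\{j:r_j<t\}$.

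Next I would lower-bound Bidder~1's realized profit. Using the XOS inequality $v(I_1(t))\ge \gamma^*(I_1(t))$ and the fact that she pays $t\gamma^*_j$ on each $j\in I_1(t)$,
\begin{equation*}
\pi_1(t)\ \ge\ \gamma^*(I_1(t))-t\cdot \gamma^*(I_1(t))\ =\ (1-t)\sum_{j:\,r_j<t}\gamma^*_j.
\end{equation*}
Taking expectations and swapping the sum with the integral,
\begin{equation*}
\mathbb{E}[\pi_1]\ \ge\ \sum_j \gamma^*_j\int_{\min(r_j,1)}^{1}(1-t)\,dt\ =\ \tfrac12\sum_j \gamma^*_j\bigl(1-\min(r_j,1)\bigr)^2.
\end{equation*}

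Finally, I would close the argument using convexity. Let $\tilde r_j=\min(r_j,1)\in[0,1]$. Since $\sum_j\gamma^*_j=1$, the weights $\{\gamma^*_j\}$ form a probability distribution, and the function $x\mapsto (1-x)^2$ is convex, so by Jensen's inequality
\begin{equation*}
\sum_j \gamma^*_j(1-\tilde r_j)^2\ \ge\ \Bigl(1-\sum_j\gamma^*_j\tilde r_j\Bigr)^2.
\end{equation*}
But $\sum_j\gamma^*_j\tilde r_j\le \sum_j\gamma^*_j r_j=\sum_j b_j\le B$, and since $1-x$ is decreasing, the right-hand side is at least $(1-B)^2$. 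Hence $\mathbb{E}[\pi_1]\ge (1-B)^2/2$, as claimed.

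The main obstacle I anticipate is not any single calculation but rather choosing the right randomization: the uniform prior on a multiplicative scaling $t\in[0,1]$ is what makes the integral $\int_{r_j}^1(1-t)\,dt=(1-r_j)^2/2$ pair cleanly with Jensen on the budget constraint. The only subtle point in the write-up is handling items on which the adversary overbids ($r_j>1$ or $\gamma^*_j=0$); the clamping $\tilde r_j=\min(r_j,1)$ is a one-line fix, after which the adversary's feasibility $\sum_j\gamma^*_j\tilde r_j\le B$ and Jensen do the rest.
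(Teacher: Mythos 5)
Your proposal is correct, and it reaches the paper's key intermediate bound $\tfrac12\sum_j\gamma^*_j(1-b_j)^2$ by essentially the same device: a uniform random multiplicative scaling of the additive witness $\gamma^*$. The two cosmetic-to-substantive differences are these. First, you use a single shared scalar $t\sim U[0,1]$ across all items, whereas the paper draws independent $X_i\sim U(0,1)$ per item; by linearity of expectation this makes no difference to the bound (each item contributes $\gamma^*_j\int_{\tilde r_j}^1(1-t)\,dt$ either way), and your correlated version is if anything slightly cleaner to state. Second, and more interestingly, the paper closes the argument by writing the adversary's minimization of $\tfrac12\sum_j\gamma^*_j(1-b_j)^2$ subject to $\sum_j b_j\gamma^*_j\le B$ as a quadratic program and exhibiting a feasible dual solution of value $\tfrac12(1-B)^2$ via the Lagrangian; you instead observe that $\{\gamma^*_j\}$ is a probability vector and apply Jensen's inequality to the convex, decreasing function $x\mapsto(1-x)^2$ on $[0,1]$. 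Your route is more elementary and shorter, and it makes transparent why the bound is exactly $\tfrac12(1-B)^2$: equality in Jensen forces the adversary to spread his budget proportionally to $\gamma^*$, i.e.\ $b_j=B\gamma^*_j$, which is precisely the primal optimum the paper's dual certificate implicitly certifies. Your handling of the edge cases (clamping $\tilde r_j=\min(r_j,1)$ and discarding items with $\gamma^*_j=0$, which the paper also sets aside) is sound, so I see no gap.
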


\begin{proof}
Bidder~$1$ selects $m$ independent random variables $X_i$, each drawn from the uniform distribution $U(0,1)$, and bids $X_i\cdot\gamma^*_i $ 
on the item $a_i$. Our goal is to show that no strategy of the adversary can prevent Bidder~1 from making a profit of $\frac{1}{2}(1-B)^2$ in expectation. For the adversary, 
we may limit our attention to bids that are at most $\gamma^*_i $. So we may parameterize the bids of the adversary by a vector of 
ratios $\mbs{b} = (b_1,\ldots,b_m) \in (0,1)^m$ such that 
$\sum_{i=1}^m b_i\cdot \gamma^*(i) \leq B$. Let $S = \{ a_i \in I | X_i > b_i \}$ be the set of items that Bidder~1 wins and 
let $\pi(\mbs{b})$ be the random variable representing Bidder~1's utility when the adversary bids 
$\mbs{b}$. We then have the following.
\begin{align*}
    \mathbb{E}[\pi(\mbs{b})] &= \mathbb{E}_{X_1 \ldots X_m} \left[v(S) - \sum_{i:a_i\in S} \gamma^*_i \cdot X_i\right] &\\
    &\geq \mathbb{E}_{X_1 \ldots X_m} \left[\sum_{i:a_i\in S} \gamma^*_i \cdot (1-X_i)\right] &\\
    &= \mathbb{E}_{X_1 \ldots X_m} \left[\sum_{i:a_i\in I} \gamma^*_i\cdot (1-X_i)\cdot\mathbbm{1}_{[b_i<X_i]}\right] &\\
        &= \sum_{i:a_i\in I} \gamma^*_i \cdot \mathbb{E}_{X_i} \left[ (1-X_i)\cdot\mathbbm{1}_{[b_i<X_i]}\right] &\\
    &= \sum_{i:a_i\in I} \gamma^*_i \cdot \frac12 \cdot (1-b_i)^2 
\end{align*}
Again, here the first inequality follows from the definition of $\gamma^*$; the second equality is due to linearity of expectation;
the final equality holds because $X_i$ is uniformly distributed.

The adversary of course seeks to find a strategy to minimize $\mathbb{E}[\pi(\mbs{b})]$  
for any fixed valuation function $v$. For any valuation function $v$, we denote this minimum value by $\pi^*(v)$. The inequalities above 
imply that $\pi^*(v) \geq \pi^*(\gamma^*) = \min_{\mbs{b}} \frac{1}{2}\sum_{i:a_i \in I}  \gamma^*_i\cdot(1-b_i)^2$. For the following 
analysis, we may assume without loss of generality that we only consider items $a_i$ where $\gamma^*_i > 0$ (Bidder~1 will lose the 
remaining items at a bid of 0). 
Now, $\pi^*(v)$ is lower-bounded by the 
optimal value of the following quadratic program. 
\begin{eqnarray*}
(\text{\tt Adversarial QP})\hspace{1cm} \min\ \frac{1}{2}\cdot\sum_{i:a_i \in I}  \gamma^*_i&\cdot& (1-b_i)^2\\
\text{s. t.}\hspace{1cm} 
 b_i  &\leq& 1 \hspace{1cm} \forall i\in[m] \\
 -b_i &\leq& 0  \hspace{1cm} \forall i\in[m] \\
 \sum_{i=1}^m b_i\cdot \gamma^*(i) &\leq& B
\end{eqnarray*}
The Lagrangian of this problem is 
\begin{align*}
    \mathcal{L}(\Vec{b}, \Vec{\lambda}) &= \frac{1}{2}\sum_{i \in I}  \gamma^*_i \cdot (1-b_i)^2 + \lambda_{2m+1} \cdot \left(\sum_{i=1}^m b_i\cdot \gamma^*_i - B \right) \\
    &+ \sum_{i=1}^m \lambda_i \cdot (b_i - 1) - \sum_{i=1}^m \lambda_{m+i} \cdot b_i
\end{align*}

This is differentiable w.r.t. each of the $b_i$, so we can compute these partial derivatives 
to be
\begin{equation*}
\frac{\partial \mathcal{L}}{\partial b_i} = \gamma^*_i\cdot (b_i -1) + \lambda_{2m+1}\cdot \gamma^*_i + \lambda_i - \lambda_{m+i}.
\end{equation*}

The dual objective function $ g(\Vec{\lambda}) = \inf_{\Vec{b}} \mathcal{L}(\Vec{b}, \Vec{\lambda})$ can be computed by 
setting $\frac{\partial \mathcal{L}}{\partial b_i} = 0$. So:
\begin{align*}
g(\Vec{\lambda}) = &\frac{1}{2} \sum_{i=1}^{m} \gamma^*_i \left( \lambda_{2m+1} 
+ \left( \frac{\lambda_i - \lambda_{m+i}}{\gamma^*_i} \right)\right)^2 & \\ 
&+ \lambda_{2m+1}\left(\sum_{i=1}^{m}  \left(1- \lambda_{2m+1} - \left( \frac{\lambda_i 
- \lambda_{m+i}}{\gamma^*_i} \right)\right)\gamma^*_i - B \right) & \\
&- \sum_{i=1}^{m} \lambda_i \left( \lambda_{2m+1} + \left( \frac{\lambda_i - \lambda_{m+i}}{\gamma^*_i} \right) \right) & \\
&- \sum_{i=1}^{m} \lambda_{m+i} \left(1 - \lambda_{2m+1} - \left( \frac{\lambda_i - \lambda_{m+i}}{\gamma^*_i} \right) \right)
\end{align*}

The constraints on the dual are simply $\vec{\lambda} \geq 0$. So 
setting $\lambda_i = 0 \; \forall \; i \in \{1 \hdots 2m\}; \lambda_{2m+1} = (1-B)$ is 
feasible. Let $\vec{\lambda'}$ denote this vector. The dual objective for this feasible input is:
\begin{align*}
g(\vec{\lambda'}) &= \frac{1}{2} \sum_{i=1}^{m} \gamma^*_i \left( \lambda_{2m+1} \right)^2 &\\
&\qquad\qquad+ \lambda_{2m+1}\left(\sum_{i=1}^{m}  \left(1- \lambda_{2m+1} \right)\gamma^*_i - B \right) & \text{($\lambda_i = 0 $)}  \\
&= \frac{1}{2} \left( \lambda_{2m+1} \right)^2 + \lambda_{2m+1}\left(\left(1-\lambda_{2m+1}\right)-B\right) 
& \text{(as \ $\sum_i \gamma^*_i= 1$)} \\
&= \frac{1}{2} \left(1-B\right)^2 & \text{(substituting for $\lambda_{2m+1}$)}
\end{align*}
This dual solution lower bounds the primal minimization program, and we have
\begin{equation*}
\min_{\mbs{b}} \mathbb{E}[\pi(\mbs{b})] \geq \frac{1}{2} \left(1-B\right)^2
\end{equation*}
as desired.

\end{proof}

Finally, we show that an analogue of Theorem~\ref{thm:simul-second} does {\em not} hold for simultaneous first-price auctions. Specifically, we prove that there exists a {\em randomized} strategy for the adversary that gives an upper bound on the profitability that is strictly smaller than $(1-B)$, showing that for simultaneous auctions, the adversary's power is greater in the first-price case than the second-price case.

\begin{theorem}
In first-price simultaneous auctions with XOS valuations, the adversary has a (randomized) strategy that restricts the risk-free profit of Bidder~1 to strictly less than $(1-B)$ in expectation.
\end{theorem}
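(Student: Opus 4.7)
The plan is to exhibit a specific additive (hence XOS) valuation together with a randomized adversary strategy whose combination bounds Bidder~$1$'s best-response expected profit strictly below $(1-B)$. Take the uniform additive valuation $v$ on $m$ items with $v(\{a_i\})=\frac{1}{m}$ for each $i$, where $m$ will be chosen large. Parameterize the adversary's strategy by an integer $1\le k \le m$: he samples a size-$k$ subset $S \subseteq I$ uniformly at random, bids $B/k$ on each $a_i \in S$, and $0$ on each $a_i \notin S$. The realized expenditure is always $k\cdot(B/k)=B$, so the budget constraint holds pointwise in every realization.

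Next I would analyze Bidder~$1$'s best response. By the symmetry of the adversary's distribution, each item lies in $S$ with marginal probability $k/m$, so for any deterministic bid vector $(x_1,\ldots,x_m)$ of Bidder~$1$ the expected profit decomposes additively as
\begin{equation*}
    \mathbb{E}[\pi_1] \;=\; \sum_{i=1}^{m}\left(\frac{1}{m}-x_i\right)\left[\frac{m-k}{m} + \frac{k}{m}\,\mathbbm{1}[x_i > B/k]\right].
\end{equation*}
Each per-item contribution is piecewise linear in $x_i$ and is maximized either at $x_i=0$ (earning $a_i$ only when $a_i \notin S$, with contribution $(m-k)/m^2$) or at $x_i=(B/k)^+$ (earning $a_i$ always, with contribution $1/m - B/k$). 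Since expected profit is linear in Bidder~$1$'s mixed strategy, randomization over pure strategies cannot improve on the per-item optimum.

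Finally, I would optimize the adversary's choice of $k$ by equating the two candidate per-item contributions. Setting $k=\lceil m\sqrt{B}\rceil$ makes each equal to $(1-\sqrt{B})/m$ up to $O(1/m)$ error, so summing over the $m$ items gives
\begin{equation*}
    \max_{x_1,\ldots,x_m} \mathbb{E}[\pi_1] \;\le\; 1 - \sqrt{B} + o(1) \qquad \text{as } m \to \infty.
\end{equation*}
Since $1-\sqrt{B} < 1-B$ for every $B \in (0,1)$, choosing $m$ sufficiently large produces an XOS instance for which the adversary's random-subset strategy restricts Bidder~$1$'s expected profit strictly below $(1-B)$.

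The main obstacle is not technical but conceptual: the right adversary strategy targets a random subset of fixed size rather than bidding independently on each item (which would either violate the budget pointwise or fail to generate enough uncertainty to deter low bids). The integrality of $m\sqrt{B}$ and the tie-breaking convention at $x_i=B/k$ contribute only lower-order error terms and do not affect the strict inequality $1-\sqrt{B}<1-B$.
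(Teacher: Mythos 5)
Your proposal is correct, and it is built on the same core idea as the paper's proof: take the uniform additive instance and have the adversary bid a flat amount on a uniformly random subset of fixed size, so that Bidder~1 cannot tell which items will be contested; then compute her best response item by item. The difference is in which parameter is optimized. The paper fixes the subset size at $\frac{m}{2}$ and tunes the two bid levels $\frac{w_1(B)}{m}$ (on $S$) and $\frac{w_2(B)}{m}$ (off $S$) so that ``win everything'' and ``concede the contested half'' yield equal profit, obtaining the bound $1-2B$ for $B<\frac14$ and $\frac23(1-B)$ for $B\ge\frac14$. You instead fix the off-$S$ bid at $0$ and tune the subset size $k$, equating $\frac{m-k}{m^2}$ with $\frac1m-\frac Bk$ at $k\approx m\sqrt{B}$; your per-item decomposition, the reduction of mixed strategies to pure ones by linearity, and the handling of the rounding and tie-breaking errors as $O(\frac1m)$ are all sound. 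Notably, your parametrization gives the quantitatively stronger upper bound $1-\sqrt{B}+o(1)$, which is at most $\min\bigl(1-2B,\ \frac23(1-B)\bigr)$ throughout $(0,1)$ (with equality only at $B=\frac14$ and the endpoints), and it remains consistent with the paper's proven lower bound of $\frac{(1-B)^2}{2}$. Both arguments establish the stated strict inequality $<1-B$; yours simply extracts more from essentially the same construction.
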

\begin{proof}
We prove this claim by considering the uniform additive simultaneous auction on $m$ items, where $m$ is even. The adversary chooses a subset $S\subseteq I$ of the items, with $|S| =\frac{m}{2}$, uniformly from the subsets of this size. He then bids $\frac{w_1(B)}{m}$ on each element in $S$, and $\frac{w_2(B)}{m}$ on each element not in $S$, where $w_1$ and $w_2$ are as follows.
\begin{align*}
    w_1(B) &= \twopartdef{2B}{0<B<\frac{1}{4}}{\frac{1}{3}+\frac{2B}{3}}{\frac{1}{4}\leq B<1} &\\
    w_2(B) &= \twopartdef{0}{0<B<\frac{1}{4}}{\frac{4B}{3}-\frac{1}{3}}{\frac{1}{4}\leq B<1}
\end{align*}

It is easily shown that this strategy is feasible for Bidder~2, and that Bidder~1's best response is to bid $\frac{w_1(B)}{m}$ on every item and win all the items. Bidder~1's profit is then
\begin{equation*}
    \pi_1^* = \twopartdef{1-2B}{0<B<\frac{1}{4}}{\frac{2}{3}(1-B)}{\frac{1}{4}\leq B<1}
\end{equation*}
which is strictly smaller than $(1-B)$ when $0<B<1$.
\end{proof}

We remark that the strategies used 
in proving Theorems~\ref{thm:simul-second} and~\ref{thm:simul-first} require no knowledge of the adversary's budget. Bidder~$1$ can implement them 
based solely on her own valuation function so these profit guarantees are extremely robust.

So, indeed, the adversary is weaker in a simultaneous auction than in the corresponding sequential auction (for example, the bound of Theorem~\ref{thm:simul-first} is 
larger than that
of Theorem~\ref{thm:xos-lb} for $B> 0.18$).
In addition, unlike for sequential auctions, the power of the adversary differs in a simultaneous 
auction depending on whether a first-price or second-price mechanism is used: the adversary is stronger in a first-price auction. Finally, unlike the sequential case, it is essential to introduce randomization to obtain non-trivial bounds in the first-price simultaneous setting.

\section{Bounds for Subadditive Valuation Functions} \label{s:subadd-bounds}

In this section we make a return to sequential auctions. We study the risk-free profitability of Bidder 1 when her valuation function is subadditive. 
Since there exist subadditive functions that are not XOS, the simple strategy from Section \ref{ss:xos-lb} 
is no longer guaranteed to work. Indeed, we present in 
Section \ref{ss:subadd-ub} a class of examples of subadditive valuations whose risk-free profitability is 
strictly less than $f(B) = (1-\sqrt{B})^2$ for an adversary with budget $B < \frac{1}{4}$. The relationship between the classes of XOS functions and subadditive functions was explored by \citet{BR11}, 
via the class of {\em $\beta$-fractionally subadditive} valuation functions.
\begin{prop}\label{prop:xos-sa}\cite{BR11}
Every subadditive valuation is $\ln{m}$-fractionally subadditive.
\end{prop}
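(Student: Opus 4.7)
The plan is to deduce this from the well-known $\Theta(\log m)$ integrality gap of the set-cover LP, using the dual LP characterization of $\beta$-fractional subadditivity. First I would recall/verify (via LP duality) that $v$ being $\beta$-fractionally subadditive is equivalent to the following statement: for every $S \subseteq I$ and every fractional cover $\{\alpha_T\}_{T \subseteq I}$ of $S$ (that is, $\alpha_T \geq 0$ and $\sum_{T \ni e} \alpha_T \geq 1$ for each $e \in S$), we have
\[
v(S) \;\leq\; \beta \sum_{T} \alpha_T\, v(T).
\]
The dual of this covering LP asks for additive weights $w_e \geq 0$ with $\sum_{e \in T} w_e \leq v(T)$ for all $T$, whose total on $S$ is as large as possible; this is exactly the statement that some supporting additive function witnesses $v(S)$ up to a factor of $\beta$.

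Second, I would show that the displayed inequality holds with $\beta = \ln m$ by rounding the fractional cover to an integral one. The classical set-cover rounding result (proved by randomized rounding, or equivalently by the dual-fitting analysis of the greedy algorithm) says: given any fractional cover $\{\alpha_T\}$ of $S$ with cost $\sum_T \alpha_T v(T)$, there exists an integral cover $T_1, \dots, T_k$ of $S$ (so $\bigcup_i T_i \supseteq S$) with
\[
\sum_{i=1}^{k} v(T_i) \;\leq\; H_{|S|} \cdot \sum_T \alpha_T v(T),
\]
where $H_{|S|}$ is the harmonic number, bounded by $\ln m$ (up to lower-order terms absorbed into the statement). Now subadditivity and monotonicity of $v$ give
\[
v(S) \;\leq\; v\Big(\bigcup_{i} T_i\Big) \;\leq\; \sum_{i=1}^{k} v(T_i) \;\leq\; H_{|S|} \sum_T \alpha_T v(T) \;\leq\; \ln m \cdot \sum_T \alpha_T v(T),
\]
which is precisely the $\ln m$-fractionally subadditive inequality. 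Combined with the LP-duality characterization, this produces the collection of additive lower bounds certifying the claim.

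The main obstacle is the rounding step, which is where all of the logarithmic loss comes from; everything else is just subadditivity plus LP duality. In particular, one must be careful that the integrality-gap argument uses \emph{subadditivity} of $v$ on the union of covering sets (not merely on a partition) and that the constant is pinned down to $\ln m$ rather than $H_m$ or $\ln m + 1$. This can be achieved either by using the precise randomized-rounding bound of $H_m \leq 1 + \ln m$ and absorbing the additive constant into the definition, or by invoking the sharper asymptotic version of the set-cover rounding theorem. Apart from this tightening, the argument is essentially forced: the equivalence between the two notions of $\beta$-fractional subadditivity reduces the proposition to an integrality-gap statement that is independent of the particular subadditive $v$.
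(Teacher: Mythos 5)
The paper does not actually prove this proposition; it is imported verbatim from \citet{BR11}, so there is no in-paper argument to compare against. Your reconstruction -- LP duality to turn $\beta$-fractional subadditivity into a statement about fractional covers, followed by set-cover rounding and subadditivity of $v$ on the union of the rounded sets -- is exactly the standard route, and (to the best of my knowledge) the one used in the cited source. The duality step is correct: the LP $\max\{\sum_{e\in S} w_e : w\ge 0,\ \sum_{e\in T} w_e \le v(T)\ \forall T\subseteq S\}$ has as its dual the fractional cover LP $\min\{\sum_T \alpha_T v(T) : \sum_{T\ni e}\alpha_T \ge 1\ \forall e\in S,\ \alpha\ge 0\}$, so exhibiting a good additive witness for $S$ is equivalent to lower-bounding the cost of every fractional cover of $S$. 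The chain $v(S)\le v(\bigcup_i T_i)\le \sum_i v(T_i)$ uses monotonicity and subadditivity correctly.

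The one genuine weak point is the constant, and you should not paper over it. The greedy/dual-fitting bound for weighted set cover relative to the LP optimum is $H_{|S|}\le H_m$, so your argument honestly delivers $H_m$-fractional subadditivity, i.e.\ $\beta \le 1+\ln m$, not $\beta=\ln m$. ``Absorbing the additive constant into the definition'' is not a legitimate move: $\beta$-fractional subadditivity is a pointwise inequality, and the literal statement with constant exactly $\ln m$ is in fact \emph{false} for small $m$ (for $m\le 2$ take $S$ a singleton with $v(S)>0$; then the required witness would need $r_e\ge v(S)/\ln m > v(S)$ while also satisfying $r_e\le v(S)$). There is also no ``sharper asymptotic rounding theorem'' that beats $H_m$ here -- the set-cover integrality gap is $(1-o(1))\ln m$, so $H_m$ is essentially the truth. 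The correct conclusion of your proof is that every monotone subadditive valuation is $H_m$-fractionally subadditive; this is what \citet{BR11} establish and is all the present paper needs, since the proposition is only used to argue that the naive bid-vector strategy degrades to a $\Theta(1/\log m)$ guarantee. I would state the bound as $H_m$ (or $1+\ln m$) and note that the $\ln m$ phrasing in the proposition is an asymptotic shorthand.
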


It follows from this proposition that there exists a bid vector $r$ satisfying $\sum_{j:a_j\in I}r_j = v(I)$ and $\sum_{j:a_j\in S}r_j \leq \ln{m}\cdot v(S)$ for each 
subset $S \subseteq I$. In \cite{BR11}, the 
authors also provide an example showing that this is tight. Consequently, if Bidder~1 plays a strategy analogous to the strategy from Section \ref{ss:xos-lb} on this example, using the bid vector $r$ in place of the additive function $\gamma^*$, then any strict subset $S$ of $I$ that Bidder~1 wins is only guaranteed to have value $O(\frac{1}{\ln{m}})$ and, potentially, this guarantees a profit of only $O(\frac{1}{\ln{m}})$. This relationship indicates an inherent difficulty in showing a non-trivial lower bound on the profitability of subadditive valuations. However, we make progress on an important special case, namely subadditive valuations on identical items. 
Here, every subset $S$ of $I$ such that $|S| = k$ where $0 \leq k \leq m$ 
has the same value that we denote by $v(k)$. The earlier assumptions still hold, so $v(0) = 0$, and $v$ is monotone. In Section~\ref{ss:subadd-lb}, we present a strategy for Bidder~1 that gives a new lower bound on the profitability. Then, in Section~\ref{ss:subadd-ub}, we prove that this lower bound is tight when the budget $B$ is in $(0,\frac{1}{4})$. Moreover, the lower bound is tight at every $B$ of the form $(\frac{k}{k+1})^2$ 
for any positive integer $k$, and we conjecture that this tightness extends to all $B\in(0,1)$.

\subsection{The Subadditive Lower Bound with Identical Items} \label{ss:subadd-lb}
We obtain our lower bound on the profitability of Bidder~1 with a simple strategy: Bidder~1 chooses a constant price $\tilde{p}$ and a 
target allocation $\tilde{q}$ in advance, and bids $\tilde{p}$ on every item, stopping when she wins $\tilde{q}$ items. We will need the following claim.
\begin{claim} \label{clm:subadd1}
For any set $S \subseteq I$, where $|S| = q$, $v(S) \geq \frac{v(I)}{\ceil{\frac{m}{q}}}$.
\end{claim}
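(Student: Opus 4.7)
The plan is to exploit subadditivity and monotonicity of $v$ together with the identical-items assumption (which implies $v(T)$ depends only on $|T|$, so writing $v(k)$ for $v(T)$ whenever $|T|=k$ is unambiguous). The target inequality $v(q) \geq v(m)/\lceil m/q\rceil$ will fall out of a simple covering argument on $I$.

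First I would partition (or, more weakly, cover) the item set $I$ into $\lceil m/q\rceil$ blocks $S_1,\dots,S_{\lceil m/q\rceil}$, each of size at most $q$. Concretely, list the items in any order and cut them into consecutive chunks of size $q$; the last chunk may contain fewer than $q$ items, but this only helps. Then by iterated subadditivity,
\[
v(I) \;=\; v\bigl(S_1 \cup \cdots \cup S_{\lceil m/q\rceil}\bigr) \;\leq\; \sum_{i=1}^{\lceil m/q\rceil} v(S_i).
\]

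Next I would invoke monotonicity together with the identical-items assumption: since $|S_i| \leq q$, we have $v(S_i) \leq v(q)$ for each $i$. Substituting gives $v(I) \leq \lceil m/q\rceil \cdot v(q)$, which, rearranged, is exactly the stated bound $v(S) = v(q) \geq v(I)/\lceil m/q\rceil$ for any $S$ with $|S|=q$.

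There is no real obstacle here; the only thing to be careful about is that when $q \nmid m$ the final chunk $S_{\lceil m/q\rceil}$ has size strictly less than $q$, so one must use monotonicity (not just the identical-items equation) to bound $v(S_i) \leq v(q)$. The ceiling in the bound is precisely what makes this covering feasible with blocks of size $\leq q$, so the inequality is tight in form.
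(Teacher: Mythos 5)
Your proposal is correct and matches the paper's proof essentially verbatim: both partition $I$ into $\lceil m/q\rceil$ blocks of size at most $q$, apply iterated subadditivity, and use monotonicity (with the identical-items assumption) to bound the value of the final, possibly smaller, block by $v(q)$. The only cosmetic difference is that the paper splits into the cases $q \mid m$ and $q \nmid m$ explicitly, whereas you handle both uniformly.
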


\begin{proof}
    Let $S$ be a subset of $I$ of size $q$. We want to show that $v(S) = v(q) \geq \frac{v(I)}{\ceil{\frac{m}{q}}}$. Consider any partition of the 
    set $I$ into $\ell = \ceil{\frac{m}{q}}$ sets $S_1,\ldots,S_\ell$, where $S_1 = S$, and each of the first $\floor{\frac{m}{q}}$ 
    sets $S_1,\ldots,S_{\floor{\frac{m}{q}}}$ has size $q$. 
    
    We have two cases. If $m$ is a multiple of $q$, then $\ceil{\frac{m}{q}} = \floor{\frac{m}{q}} = \ell$. By subadditivity, we have
    \begin{equation*}        
    v(I) \ = \ v(S_1 \cup \ldots \cup S_\ell)
        \ \leq\  v(S_1) + \ldots + v(S_\ell)
       \  = \ \ell\cdot v(q)
    \end{equation*}  
    Thus we have $v(q) \geq \frac{v(I)}{\ell} = \frac{v(I)}{\ceil{\frac{m}{q}}}$.
    
    Now if $m$ is not a multiple of $q$, then $\ceil{\frac{m}{q}} - \floor{\frac{m}{q}} = 1$. Then $|S_\ell| = m - q\cdot\floor{\frac{m}{q}} < q$, and we have
    \begin{align*}  
        v(I) &= v(S_1 \cup \ldots \cup S_\ell) & \\
        &\leq v(S_1) + \ldots + v(S_\ell) &\\
        &= (\ell-1)\cdot v(q) + v(m-q\cdot\floor{\frac{m}{q}}) &\\
        &\leq (\ell-1)\cdot v(q) + v(q) &\\
        &= \ell\cdot v(q).
    \end{align*}  
    Here the first in equality follows by subadditivity and the second by monotonicity.
    So we have $v(q) \geq \frac{v(I)}{\ell} = \frac{v(I)}{\ceil{\frac{m}{q}}}$ as required.
\end{proof}

Now, for an appropriate choice of $\tilde{p}$ and $\tilde{q}$, we show that Bidder~1 can guarantee a profit of at least $t^*(B) - O(\frac{1}{m})$, where
\begin{align*}
    t^*(B) = \max_{k \in \mathbb{Z}_{\geq 1}} t_k(B).
\end{align*}

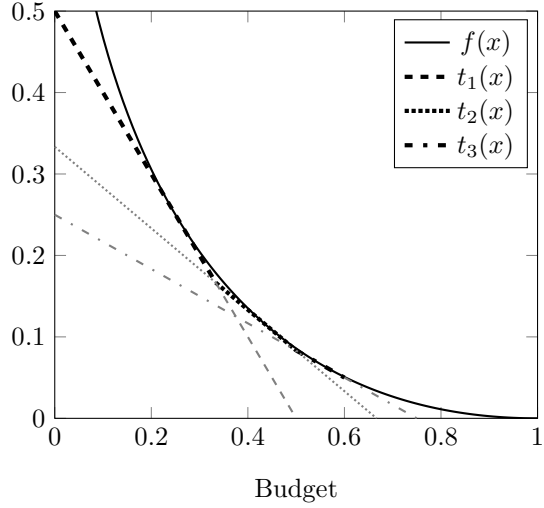
\begin{figure}[ht]
    \centering
    \begin{tikzpicture}
        \begin{axis}
        [
            width=8cm,
            height=7cm,
            title={Budget},
            title style=
            {
                at={(current axis.south)},
                anchor=north,
                outer sep=.9cm,
            },
            xmin=0,
            xmax=1,
            ymin=0,
            ymax=0.5,
            xticklabel style={
                /pgf/number format/fixed,
                /pgf/number format/precision=5
            },
            scaled x ticks=false,
            yticklabel style={
                /pgf/number format/fixed,
                /pgf/number format/precision=5
            },
            scaled y ticks=false,
            domain=0:1,
            samples=100,
        ]
        \addplot[thick]
            {1+x-2*sqrt(x)};
        \addplot[ultra thick, dashed]
            coordinates {
                (0,0.5)
                (0.333333,0.166667)
            };
        \addplot[ultra thick, densely dotted]
            coordinates {
                (0.333333,0.166667)
                (0.5,0.083333)
            };
        \addplot[ultra thick, loosely dashdotted]
            coordinates {
                (0.5,0.083333)
                (0.6,0.05)
            };

        \addplot[gray, thick, dashed]
            coordinates {
                (0.333333,0.166667)
                (0.5,0)
            };
        \addplot[gray, thick, densely dotted]
            coordinates {
                (0,0.333333)
                (0.333333,0.166667)
            };
        \addplot[gray, thick, densely dotted]
            coordinates {
                (0.5,0.0833333)
                (0.666667,0)
            };
        \addplot[gray, thick, loosely dashdotted]
            coordinates {
                (0,0.25)
                (0.5,0.083333)
            };
        \addplot[gray, thick, loosely dashdotted]
            coordinates {
                (0.6,0.05)
                (0.75,0)
            };
        
        \legend{$f(x)$,$t_1(x)$,$t_2(x)$,$t_3(x)$}
        \end{axis}
    \end{tikzpicture}
    \caption{Plot of the functions $f(x)$, $t_1(x)$, $t_2(x)$, and $t_3(x)$. $t^*(x)$ is the piecewise linear function shown by the bold line segments.}
    \label{fig:t*}
\end{figure}

Interestingly, $t_k(B) = \frac{1}{k+1}-\frac{B}{k}$ is the tangent to our earlier lower bound of $f(B)=(1-\sqrt{B})^2$ 
at $B = (\frac{k}{k+1})^2$ (see Figure \ref{fig:t*}). Denote by $SI_m$ the subadditive valuations on $m$ identical items. We show the following lower bound on the profitability in the subadditive case with identical items.
\begin{theorem}\label{thm:subadd-lb}
$\mathcal{P}(SI_m,B) \geq t^*(B) - O(\frac{1}{m})$.
\end{theorem}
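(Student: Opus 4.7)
The plan is to exhibit, for each positive integer $k$, a simple non-adaptive pure strategy that guarantees Bidder~1 a profit of at least $t_k(B) - O(\tfrac{1}{m})$; taking the maximum over $k$ then yields the theorem. As the authors hint just before the statement, Bidder~1 will commit in advance to a fixed per-item price $\tilde{p}$ and a target quantity $\tilde{q}$, bid $\tilde{p}$ on each successive item, and stop (bid $0$) as soon as she has won $\tilde{q}$ of them.

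For a fixed $k \ge 1$, I would set $\tilde{q} = \lceil m/(k+1)\rceil$ and pick $\tilde{p}$ infinitesimally above $B/(m-\tilde{q}+1)$. The feasibility argument is then the same in either the first-price or second-price setting: to win a round in which Bidder~1 bids $\tilde{p}$, the adversary must pay at least $\tilde{p}$ (either $\tilde{p}^{+}$ in the first-price case or exactly $\tilde{p}$ in the second-price case), so his budget $B$ lets him outbid Bidder~1 in at most $\lfloor B/\tilde{p}\rfloor$ rounds. By choice of $\tilde{p}$, $\lfloor B/\tilde{p}\rfloor \le m - \tilde{q}$, so no matter how the adversary responds, Bidder~1's bids succeed in at least $\tilde{q}$ of the $m$ rounds before she terminates. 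Consequently she ends up with exactly $\tilde{q}$ items at total cost $\tilde{q}\tilde{p}$.

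Next I would bound her profit using the identical-items structure. Since $v(I)=1$ and $\lceil m/\tilde{q}\rceil \le k+1$ by the choice of $\tilde{q}$, Claim~\ref{clm:subadd1} gives $v(\tilde{q}) \ge 1/(k+1)$. A direct computation shows
\[
\tilde{q}\tilde{p} \;=\; \frac{\lceil m/(k+1)\rceil}{\,m-\lceil m/(k+1)\rceil+1\,}\cdot B \;+\; O\!\left(\tfrac{1}{m}\right) \;=\; \frac{B}{k} \;+\; O\!\left(\tfrac{1}{m}\right),
\]
where the error absorbs the ceiling and the infinitesimal slack in $\tilde{p}$. Putting the two pieces together yields a guaranteed profit of at least $\tfrac{1}{k+1} - \tfrac{B}{k} - O(\tfrac{1}{m}) = t_k(B) - O(\tfrac{1}{m})$. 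Since Bidder~1 is free to pick any $k$ before the auction begins, she can select the one that maximizes $t_k(B)$, giving $\mathcal{P}(SI_m,B) \ge t^*(B) - O(\tfrac{1}{m})$.

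The main obstacle, and the only step requiring real care, is the approximation $\tilde{q}\tilde{p} = B/k + O(1/m)$: one must verify that the ratio $\lceil m/(k+1)\rceil / (m - \lceil m/(k+1)\rceil + 1)$ differs from $1/k$ by only $O(1/m)$ with a constant depending on $k$. Because $t^*(B) = 0$ once $B \ge 1$, the relevant $k$ that maximizes $t_k(B)$ is bounded in terms of $B$ alone, so the hidden constant is likewise a function of $B$ (and not of $m$), which is all the statement requires. A minor side check is that the strategy is well-defined, i.e.\ $\tilde{q}\ge 1$ and $\tilde{p}\le 1$, both of which hold once $m$ is large enough relative to the chosen $k$.
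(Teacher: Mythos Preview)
Your proposal is correct and follows essentially the same approach as the paper: fix a target $\tilde q=\lceil m/(k+1)\rceil$, bid a constant price just above $B/(m-\tilde q+1)$ until $\tilde q$ items are won, use Claim~\ref{clm:subadd1} for the value bound, and verify the cost is $B/k+O(1/m)$. The paper's proof is the same up to an index shift (it parameterizes by $q=\lceil m/k\rceil$ with $k\ge 2$ and arrives at $t_{k-1}(B)-O(1/m)$), and it likewise absorbs the ceiling error into the $O(1/m)$ term with a $k$-dependent constant.
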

\begin{proof}
Consider a normalized instance of 
the subadditive sequential auction on identical items where the adversary has normalized budget $B$. First, for any choice of 
target allocation $q$, we want to find the minimum price-per-item $p_q$ so that the adversary's budget is insufficient to stop 
Bidder~1 from winning $q$ items. In other words, we require the smallest possible price $p_q$ so that the adversary cannot win 
more than $m-q$ items. That is, we need to let $p_q(m-q+1) = B+\delta$ for some negligibly small $\delta$. We may ignore the 
negligible additive $\delta$ term and let $p_q(m-q+1) = B$, from which we obtain $p_q = \frac{B}{m-q+1}$. So, for any $q$ such 
that $0\leq q \leq m$, in order to win exactly $q$ items, Bidder~1 bids a price $p_q = \frac{B}{m-q+1}$ on every item and stops when 
she wins $q$ items. By Claim \ref{clm:subadd1}, the value of this set in the normalized auction is at least $\frac{1}{\ceil{\frac{m}{q}}}$.

Finally, we want to choose a target allocation $\tilde{q}$ that maximizes Bidder~1's profit under this constant-price strategy. For the 
purpose of our analysis, for a fixed $m$, any optimal choice of $q$ must be of the form $q = \ceil{\frac{m}{k}}$ for some 
positive integer $k\geq2$, which is the smallest choice of $q$ that has value at least $\frac{1}{k}$ according to our lower 
bound in Claim \ref{clm:subadd1}. Thus for any positive integer $k$, the profit obtained from winning exactly $q = \ceil{\frac{m}{k}}$ 
items with our constant-price strategy is at least
\begin{align*}
    v(q) - q\cdot p_q &\geq \frac{1}{\ceil{\frac{m}{q}}} - \frac{qB}{m-q+1} &\\
    &= \frac{1}{\ceil{\frac{m}{\ceil{\frac{m}{k}}}}} - \frac{\ceil{\frac{m}{k}}B}{m-\ceil{\frac{m}{k}}+1} &\\
    &\geq \frac{1}{\ceil{\frac{m}{(\frac{m}{k})}}} - \frac{(\frac{m}{k}+1)B}{m-(\frac{m}{k}+1)+1}.
\end{align*}
Therefore
\begin{align*}  
    v(q) - q\cdot p_q &\geq \frac{1}{k} - B\left(\frac{m+k}{m(k-1)}\right) &\\
    &= \frac{1}{k} - \frac{B}{(k-1)} - \frac{1}{m}\cdot \left(\frac{Bk}{k-1}\right) &\\
    &= t_{k-1}(B) - \frac{1}{m}\cdot \left(\frac{Bk}{k-1}\right)
\end{align*}
where $t_k(B) = \frac{1}{k+1} - \frac{B}{k}$ is the tangent to $f(B) = (1-\sqrt{B})^2$ at $B = (\frac{k}{k+1})^2$. The second term in the 
above expression is an additive $O(\frac{1}{m})$ factor that we subtract from our lower bound, but this factor goes to 0 as the number 
of items increases. Bidder~1's strategy is then to choose $\tilde q$ to maximize this value, which is equivalent to maximizing over the sequence of tangents. This completes the proof of the lower bound.
\end{proof}

\subsection{The Subadditive Upper Bound with Identical Items} \label{ss:subadd-ub}

In this section we present a matching upper bound for the range $0 < B < \frac{1}{4}$. For any $B$ in this range, we 
have that $\max_{k \in \mathbb{Z}_{\geq 1}} t_k(B) = t_1(B)$, so our lower bound is just $t_1(B) = \frac{1}{2} - B$.  We match this lower 
bound by constructing a sequence of valuation functions and corresponding strategies for the adversary that give us the following theorem.

\begin{theorem} \label{thm:sai-ub}
$\mathcal{P}(SI_m,B) \leq t^*(B) + O(\frac{1}{\sqrt{m}})$ when $B\in(0,\frac{1}{4})$ and $m$ is larger than some constant $m_0$ that depends only on $B$.
\end{theorem}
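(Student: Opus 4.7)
The plan is to construct, for each sufficiently large $m$, a specific subadditive valuation $v_m$ on $m$ identical items together with an adversary strategy that caps Bidder~1's risk-free profit at $t^*(B) + O(1/\sqrt{m})$. Since $t^*(B) = t_1(B) = \tfrac12 - B$ on $(0,\tfrac14)$, this will match the lower bound of Theorem~\ref{thm:subadd-lb}.

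The key observation guiding the construction is that by the reasoning of the lower bound proof, any ``stopping'' strategy of Bidder~1 (bid a constant price $p$ on each item until she has won $q^{*}$ items, then bid $0$) yields profit at most $v_m(q^{*}) - q^{*}\cdot B/(m-q^{*}+1)$. To cap this quantity at $\tfrac12-B$ for every $q$, the valuation must satisfy the pointwise upper constraint
\[
v_m(q)\ \le\ \tfrac12 - B + \frac{qB}{m-q+1} \qquad \text{for all } q\in[0,m].
\]
At the same time, subadditivity forces the lower bounds $v_m(k) \ge 1/\lceil m/k\rceil$ and $v_m(k)+v_m(m-k)\ge 1$, together with the normalization $v_m(m)=1$. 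These meet tightly at $q=\lceil m/2\rceil$, where both sides equal $\tfrac12$. I would take $v_m$ to be (essentially) the largest subadditive function sitting under the upper constraint; concretely, one can fix $v_m(q)$ to match the subadditivity lower bound at "nice" values of $q$ (like $q\approx m/j$), and interpolate monotonically in between, verifying subadditivity partition-by-partition. The $O(1/\sqrt{m})$ slack in the final bound comes precisely from the tension between the upper constraint (which is itself not subadditive) and the subadditive floor.

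With $v_m$ in hand, I would then describe the adversary's strategy. Against any bid vector $p_1\le\cdots\le p_m$, the adversary's best response is to win the $k^{*}$ cheapest items, where $k^{*} = \arg\min_k \left[\,v_m(m-k) + S_k\,\right]$ subject to $S_k = \sum_{i\le k} p_i \le B$. A swapping argument (as used in the simultaneous-auction proof) shows that this threshold response dominates any alternative, and standard symmetry/exchange arguments reduce the analysis of Bidder~1's adaptive or randomized strategies to that of non-adaptive strategies, and within the non-adaptive class to constant-price-plus-stopping strategies. For the latter, the profit bound is immediate from the upper constraint baked into $v_m$.

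The main obstacle is the first step: producing the valuation $v_m$ in a way that is simultaneously subadditive and tight against the upper constraint. Subadditivity is a large collection of inequalities (one per partition), and the candidate upper envelope $\tfrac12 - B + qB/(m-q+1)$ does not itself satisfy them, so the construction requires a careful piecewise definition and verification. The secondary obstacle is the reduction from general (adaptive, possibly randomized) Bidder~1 strategies to the constant-price stopping class; this is where an additive $O(1/\sqrt{m})$ error naturally appears, reflecting the granularity with which the adversary can force the stopping heuristic upon an adaptive opponent.
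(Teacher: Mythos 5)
Your proposal correctly identifies the necessary condition a witness valuation must satisfy against constant-price stopping strategies, namely $v_m(q)\le \tfrac12-B+\tfrac{qB}{m-q+1}$, and the observation that this envelope collides with the subadditive floor near $q\approx m/2$ is a real insight. But there is a genuine gap at the step you describe as ``standard symmetry/exchange arguments reduce the analysis of Bidder~1's adaptive or randomized strategies \dots to constant-price-plus-stopping strategies.'' No such reduction is standard, and it is in fact the entire difficulty of the theorem. In a sequential auction the adversary's remaining budget depends on the history, and the paper's own three-item example in Section~\ref{s:example} shows that optimal play is genuinely adaptive (e.g.\ Bidder~1 has an incentive to lose items at a high price to drain the adversary's budget); a profit cap established only against the constant-price stopping class says nothing about these strategies. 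Relatedly, your adversary is described as a threshold best response to a \emph{bid vector}, which is a simultaneous-auction object; the theorem requires an explicit history-dependent sequential strategy for Bidder~2, and the $O(1/\sqrt{m})$ term has to come from somewhere concrete rather than from an unspecified ``granularity'' argument.

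The paper closes exactly this gap by a different route. It does not take the maximal subadditive function under the envelope; instead it builds $\mathcal{S}_{x,m}$ with two large marginal values (for the first and $m$th item won) and a uniform additive middle of total marginal value $4x$, and it specifies an explicit adaptive adversary strategy: bid $0$ until Bidder~1 wins an item, then bid $\frac{1+\sigma(x)}{d(2+\sigma(x))}$ until she loses one, after which the remaining subgame \emph{is} a scaled uniform additive auction $\mathcal{A}_{m'}$. The analysis then handles \emph{every} Bidder~1 strategy by a case split on the round $j_1$ in which she first wins and the round $j_2$ in which the adversary first wins thereafter, and bounds each case by invoking the already-proved tight bound $f_{m'}(x)\le(1-\sqrt{x})^2+\frac{1}{\sqrt{m'}}$ of Theorem~\ref{thm:xos-ub} on the residual subgame; that invocation is precisely where the $O(1/\sqrt{m})$ slack arises. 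To repair your argument you would need either to prove the reduction to stopping strategies (which I do not believe holds in the generality you assert) or to replace it with a subgame analysis of this kind.
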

\begin{proof}
For every budget $B = x$ with $0 < x < \frac{1}{4}$, let $\sigma(x) = \frac{8x}{1-4x}$. It is easy to see that $\sigma(x)$ is a (well-defined) positive 
real number for every $x$ in $(0,\frac{1}{4})$. For every $x$ in this interval and every positive integer $m$, we define the normalized identical-item 
auction on $m$ items $\mathcal{S}_{x,m}$ in the following manner.  We set $v(1) = \frac{1}{2+\sigma(x)}$, and for every $i \in \{2,\ldots,m-1\}$ 
we set $v(i) = \frac{1}{2+\sigma(x)} + \frac{(i-1)\sigma(x)}{d(2+\sigma(x))}$, where $d = m-2$. Finally, we set $v(m) = 1$. In other words, we first 
construct an unnormalized valuation function by setting the marginal value of obtaining a first item and an $m^{\text{th}}$ item to 1, and the 
marginal value of getting an $i^{\text{th}}$ item to $\frac{\sigma(x)}{d}$ for $2 \leq i \leq m-1$. Since the total value of the $m$ items is $2+\sigma(x)$, 
we divide the value of every set by this factor to obtain the above normalized instance $\mathcal{S}_{x,m}$.

We have an important reason for choosing $\sigma(x) = \frac{8x}{1-4x}$: with this choice of $\sigma(x)$, the 
expression $\frac{\sigma(x)}{2+\sigma(x)}$, which shows up repeatedly in the following analysis, simplifies to the expression $4x$. Put differently, 
for any fixed $x$, we consider valuation functions where the marginal value of getting a first item or an $m^{\text{th}}$ is $\frac{1-4x}{2}$, and the total 
marginal value of getting an additional $m-2$ items having already been allocated one item is $4x$. The reader may have observed that with this 
choice of $\sigma(x)$, this valuation function is not always subadditive. This is true: the function is subadditive if and only if $\sigma(x) \leq d$. 
When $\sigma(x) \leq d$, for any integers $i,j$ with $i \geq 1$, $j \geq 1$, and $i+j \leq m$, it is easy to see 
that $v(i+j) \leq \frac{2}{2+\sigma(x)} + \frac{(i+j-2)\sigma(x)}{d(2+\sigma(x))}$, whereas $v(i) \geq \frac{1}{2+\sigma(x)} + \frac{(i-1)\sigma(x)}{d(2+\sigma(x))}$ 
and $v(j) \geq \frac{1}{2+\sigma(x)} + \frac{(j-1)\sigma(x)}{d(2+\sigma(x))}$, so we have $v(i+j) \leq v(i) + v(j)$ and the function is subadditive. For the other 
direction, if $\sigma(x) > d$ then $v(2) = \frac{1+\frac{\sigma(x)}{d}}{2+\sigma(x)} > \frac{2}{2+\sigma(x)} = 2v(1)$, so the function is not subadditive. 
Since $d = m-2$, we require $m \geq \sigma(x) + 2$ for subadditivity.

To find our upper bound, we will consider auction instances $\mathcal{S}_{x,m}$ where the adversary can play according to the following 
strategy. Initially, the adversary makes a bid of $0$ on every item until Bidder~1 wins an item. Then, after Bidder~1 wins her first item, if the 
adversary has not yet won an item he makes a bid of $\frac{1+\sigma(x)}{d(2+\sigma(x))}$ on every item until Bidder~1 loses an item. At this 
point, each of the two bidders has won at least one item, so any allocation in the remaining subgame gives at least 1 item and at most $m-1$ 
items to Bidder~1. Consequently, the remaining subgame is simply a scaled instance of a uniform additive auction $\mathcal{A}_{m'}$ for 
some $m' \leq d$, and from this point on Bidder~2 simply plays an optimal strategy for the uniform additive auction. The adversary's bid 
of $\frac{1+\sigma(x)}{d(2+\sigma(x))}$ on each item induces a lower bound on the size of the auction instances that we consider: we 
require $x \geq \frac{1+\sigma(x)}{d(2+\sigma(x))}$ for this bid to be feasible. Substituting $d = m-2$ into the above inequality, we 
have $m \geq \frac{1+\sigma(x)}{x(2+\sigma(x))}+2$. Combining this with the lower bound for subadditivity, for any fixed $x$ we 
let $\mathcal{L}(x) = \max(\sigma(x)+2,\frac{1+\sigma(x)}{x(2+\sigma(x))}+2)$, and we only consider auction instances $\mathcal{S}_{x,m}$ 
on $m \geq \mathcal{L}(x)$ items, since these are the only subadditive instances in which the above strategy is feasible. Observe that for 
any fixed $x \in (0,\frac{1}{4})$, $\mathcal{L}(x)$ is a constant. For any budget in this range, the following theorem provides an upper bound on 
the risk-free profit in a subadditive, identical-item sequential auction. This upper bound applies to the general subadditive case, 
showing that the profitability of subadditive functions is strictly less than that of additive, submodular or XOS functions.

Consider the auction instance $\mathcal{S}_{x,m}$ on $m$ identical items. Our goal is to show that if Bidder~2 plays according to the above 
strategy in this auction, then for every strategy of Bidder~1, her risk free profit matches the lower bound. In particular, Bidder~1's best response 
to this strategy has a payoff of at most $t_1(x) + O(\frac{1}{\sqrt{m}}$). Suppose Bidder~1 plays a strategy such that $j_1$ is the index of the first item 
won by Bidder~1 and $j_2 > j_1$ is the index of the first item won by Bidder~2 after Bidder~1 wins an item (where necessary, we will consider separately 
the case where Bidder~2 does not win any item after Bidder~1 wins item $a_{j_1}$). We will show that for any feasible choice of $j_1$ and $j_2$, 
Bidder~1's payoff in any strategy that results in this choice is at most $t_1(x) + O(\frac{1}{\sqrt{m}}) = \frac{1}{2} - x + O(\frac{1}{\sqrt{m}})$. We have the 
following cases for $j_1$ and $j_2$.\medskip

\noindent{\tt Case 1:} Bidder~2 wins the first item.\\ 
If Bidder~2 wins the first item, then $j_1 \geq 2$. Immediately after Bidder~1 wins item $a_{j_1}$ (for a price of 0), the remaining subgame is the 
uniform additive auction $\mathcal{A}_{m'}$ where $m' = m-j_1$ and the remaining unscaled value is $\frac{\sigma(x)}{2+\sigma(x)} - \frac{(j_1-2)\sigma(x)}{d(2+\sigma(x))}$. 
Since Bidder~2 has only made bids equal to $0$, the remaining unscaled budget is $x$. We first consider the case where Bidder~1 wins the second item, so $j_1 = 2$. 
There are $d = m-2$ items remaining, so the remaining value is simply $\frac{\sigma(x)}{2+\sigma(x)}$. Consequently the remaining subgame is an instance of the 
uniform additive auction $\mathcal{A}_{d}$ with scaled budget $\frac{2+\sigma(x)}{\sigma(x)}\cdot x$ when the value is scaled to $1$. Since Bidder~1 
makes a profit of $\frac{1}{2+\sigma(x)}$ on the item $a_{j_1}$, the risk-free profitability of any subgame with $j_1 = 2$ is
\begin{align*}
    \frac{1}{2+\sigma(x)} + \frac{\sigma(x)}{2+\sigma(x)}f_d\left(\frac{2+\sigma(x)}{\sigma(x)}\cdot x\right) &= \frac{1}{2+\sigma(x)} + 4xf_d\left(\frac{1}{4}\right) &\\
    &\leq \frac{1}{2+\sigma(x)} + 4x\left(f\left(\frac{1}{4}\right) + \frac{1}{\sqrt{d}}\right) &\\
    &\qquad\qquad \text{by Theorem \ref{thm:xos-ub}} &\\
    &= \frac{1}{2+\sigma(x)} + 4x\left(\frac{1}{4} + \frac{1}{\sqrt{d}}\right) &\\
    &= \left(\frac{1}{2+\sigma(x)} + x\right) + \frac{4x}{\sqrt{d}} &\\
    &= \left(\frac{1}{2} - x\right) + \frac{4x}{\sqrt{d}} &\\
    &= t_1(x) + O\left(\frac{1}{\sqrt{m}}\right)
\end{align*}

Observe that for any strategy where $j_1 > 2$, Bidder~1 is simply giving up items to the adversary at a price of 0, and the remaining 
subgame is a uniform additive auction where the unscaled budget of the adversary remains the same but the total number of 
items (and total value) decreases. Since $g_m(x,0) \geq h_m(x,0)$ (see Section \ref{ss:xos-ub}), the risk-free profitability of any strategy 
where $j_1 > 2$ is upper bounded by the risk-free profitability of a strategy with $j_1 = 2$, and this profitability is at 
most $t^*(x) + O\left(\frac{1}{\sqrt{m}}\right)$ for an adversary with budget $x$.\medskip

\noindent{\tt Case 2:} Bidder~1 wins the first $m-1$ items.\\
It remains to consider the cases where $j_1 = 1$, where Bidder~1 wins the first item. First, suppose Bidder~1 wins the first item (for a price of 0), 
and then wins the next $m-2$ items. After Bidder~1 wins the first item, since Bidder~2 has not won an item his strategy is to 
bid $\frac{1+\sigma(x)}{d(2+\sigma(x))}$ on each of the remaining items, so the total price paid by Bidder~1 for the next $d = m-2$ items is 
at least $\frac{1+\sigma(x)}{2+\sigma(x)}$. Hence Bidder~1's profit is at most $1 - \frac{1+\sigma(x)}{2+\sigma(x)} = \frac{1}{2+\sigma(x)} = \frac{1}{2} - 2x$, 
which is at most $t_1(x)$.\medskip

\noindent{\tt Case 3:} $j_1 = 1$ and $j_2 \leq m-1$.\\ 
The only remaining case is where Bidder~1 wins the first item for a price of 0, and Bidder~2 wins an item in $\{a_{2},\ldots,a_{m-1}\}$. After Bidder~1 
wins the first item, Bidder~2 bids $\frac{1+\sigma(x)}{d(2+\sigma(x))}$ on every item until he wins the item $a_{j_2}$. Bidder~1 pays a 
total of $(j_2-2)\frac{1+\sigma(x)}{d(2+\sigma(x))}$ for the first $j_2-1$ items, which have total value $\frac{1}{2+\sigma(x)}+(j_2-2)\frac{\sigma(x)}{d(2+\sigma(x))}$. 
Since the price of each of these items (which is $\frac{1+\sigma(x)}{d(2+\sigma(x))}$) is greater than the marginal value of each of these 
items (which is $\frac{\sigma(x)}{d(2+\sigma(x))}$), by choosing to increase $j_2$ Bidder~1 is winning items from the adversary at a price that is 
higher than their marginal value in the remaining uniform additive auction. Since $g_m(x,1) \leq h_m(x,1)$ (see Section \ref{ss:xos-ub}), the risk-free 
profitability of any strategy where $j_2 > 2$ is lower than the risk-free profitability of a strategy where $j_2 = 2$. Consequently, we will fix $j_2 = 2$ and 
show that Bidder~1 makes a profit of at most $t^*(x) + O\left(\frac{1}{\sqrt{m}}\right)$ with this choice.

Now, after Bidder~2 wins item $a_2$, the remaining subgame is an instance of the uniform additive auction on $m-2$ items where the total 
unscaled value is $\frac{\sigma(x)}{2+\sigma(x)}$ which is equal to $4x$, and the total unscaled budget is $x - \frac{1+\sigma(x)}{d(2+\sigma(x))}$, 
which simplifies to $\frac{dx-2x-\frac{1}{2}}{d}$. When the budget is scaled by $\frac{1}{4x}$, it becomes $\frac{dx-2x-\frac{1}{2}}{4dx}$. Finally, we 
also need to add the profit made by Bidder~1 from item $a_1$, which is $\frac{1}{2+\sigma(x)}$. Putting all this together, Bidder~1's profit is at most
\begin{align*}
    \frac{1}{2+\sigma(x)} &+ 4xf_{m-2}\left(\frac{dx-2x-\frac{1}{2}}{4dx}\right) &\\
    &= \frac{1-4x}{2} + 4x\left[1 + \frac{dx-2x-\frac{1}{2}}{4dx} - 2\sqrt{\frac{dx-2x-\frac{1}{2}}{4dx}} + \frac{1}{\sqrt{m-2}}\right] &\\
    &\qquad\qquad\qquad\qquad \text{by Theorem \ref{thm:xos-ub}} &\\
    &= \frac{1}{2} - 2x + 4x + \frac{dx-2x-\frac{1}{2}}{d} - 2\sqrt{\frac{4x(dx-2x-\frac{1}{2})}{d}} + \frac{4x}{\sqrt{m-2}} &\\
    &= \frac{1}{2} + 3x - \frac{(2x+\frac{1}{2})}{d} - 2\sqrt{4x^2 - \frac{4x(2x+\frac{1}{2})}{d}} + \frac{4x}{\sqrt{m-2}} &\\
    &= \frac{1}{2} + 3x - \frac{(2x+\frac{1}{2})}{d} - 2\sqrt{4x^2 \left[1 - \frac{2x+\frac{1}{2}}{dx}\right]} + \frac{4x}{\sqrt{m-2}} &\\
    &= \frac{1}{2} + 3x - 4x\sqrt{1 - \frac{2x+\frac{1}{2}}{dx}} - \frac{(2x+\frac{1}{2})}{d} + \frac{4x}{\sqrt{m-2}} 
\end{align*}
Since $m \geq \mathcal{L}(x)$, $d \geq \frac{1+\sigma(x)}{x(2+\sigma(x))}$, so $dx \geq \frac{1+\sigma(x)}{2+\sigma(x)} = 2x + \frac{1}{2}$, 
so we have that $0 \leq 1-\frac{2x+\frac{1}{2}}{dx} \leq 1$. Then
\begin{align*}
    \frac{1}{2} + 3x &- 4x\sqrt{1 - \frac{2x+\frac{1}{2}}{dx}} - \frac{(2x+\frac{1}{2})}{d} + \frac{4x}{\sqrt{m-2}} &\\
    &\leq \frac{1}{2} + 3x - 4x\left(1 - \frac{2x+\frac{1}{2}}{dx}\right) - \frac{(2x+\frac{1}{2})}{d} + \frac{4x}{\sqrt{m-2}} &\\
    &= \frac{1}{2} - x + \left[\frac{6x+\frac{3}{2}}{d} + \frac{4x}{\sqrt{m-2}}\right] &\\
    &= t_1(x) + O(\frac{1}{\sqrt{m}}).
\end{align*}
This completes the proof of Theorem \ref{thm:sai-ub}.
\end{proof}

An important consequence of the above result is that the lower bound for XOS valuations does not hold for subadditive valuations. 
This differentiates the class of subadditive valuations from the additive, submodular and XOS classes in that Bidder~1 can no 
longer guarantee a profit of $(1-\sqrt{B})^2$ when her valuation function is subadditive.

\bibliography{resources}

\end{document}